\def\BibTeX{{\rm B\kern-.05em{\sc i\kern-.025em b}\kern-.08em
    T\kern-.1667em\lower.7ex\hbox{E}\kern-.125emX}}
\newtheorem{theorem}{Theorem}[section]
\newtheorem{corollary}{Corollary}[theorem]
\newtheorem{lemma}[theorem]{Lemma}
\newtheorem{definition}[theorem]{Definition}
\newtheorem{assumption}[theorem]{Assumption}
\setlist[itemize]{leftmargin=*}
\begin{document}

\title{Is Deadline Oblivious Scheduling Efficient\\ for Controlling Real-Time Traffic\\ in Cellular Downlink Systems? 
}

\author{
    \IEEEauthorblockN{Sherif ElAzzouni\IEEEauthorrefmark{1}, Eylem Ekici\IEEEauthorrefmark{1}, Ness Shroff\IEEEauthorrefmark{1}\IEEEauthorrefmark{2}}
    \IEEEauthorblockA{\IEEEauthorrefmark{1}Dept. of Electrical and Computer Engineering, Ohio State University.}
    \IEEEauthorblockA{\IEEEauthorrefmark{2}Dept. of Computer Science and Engineering, Ohio State University.
    \\\{elazzouni.1, ekici.2, shroff.11\}@osu.edu}
}

\maketitle

\begin{abstract}
The emergence of bandwidth-intensive latency-critical traffic in 5G Networks, such as Virtual Reality and Cloud Gaming, has motivated interest in wireless resource allocation problems for flows with hard-deadlines. Attempting to solve this problem brings about the following two key challenges: (i) The flow arrival and the wireless channel state information are not known to the Base Station (BS) apriori, thus, the allocation decisions need to be made in an \textit{online} manner. (ii) Resource allocation algorithms that attempt to maximize a reward in the wireless setting will likely be unfair, causing unacceptable service for some users. In the first part of this paper, we model the problem of allocating resources to deadline-sensitive traffic as an online convex optimization problem, where the BS acquires a per-request reward that depends on the amount of traffic transmitted within the required deadline. We address the question of whether we can efficiently solve that problem with low complexity. In particular, whether we can design a constant-competitive scheduling algorithm that is oblivious to requests' deadlines. To this end, we propose a primal-dual Deadline-Oblivious (DO) algorithm, and show it is approximately 3.6-competitive. Furthermore, we show via simulations that our algorithm tracks the prescient offline solution very closely, significantly outperforming several algorithms that were previously proposed.  Our results demonstrate that even though a scheduler may not know the deadlines of each flow, it can still achieve good theoretical and empirical performance. In the second part, we impose a stochastic constraint on the allocation, requiring a guarantee that each user achieves a certain timely throughput (amount of traffic delivered within the deadline over a period of time). We propose a modified version of our algorithm, called the Long-term Fair Deadline Oblivious (LFDO) algorithm for that setup. We combine the Lyapunov framework for stochastic optimization with the Primal-Dual analysis of online algorithms, to show that LFDO retains the high-performance of DO, while satisfying the long-term stochastic constraints.

\end{abstract}


\section{Introduction}
Next generation mobile networks are poised to support a set of diverse applications, many of which are both bandwidth-intensive and latency-sensitive, having strict requirements on end-to-end delay. In applications like Virtual Reality, Cloud Gaming, and Video Streaming, it is critical that end users receive the bulk of their data within a prespecified hard deadline. Any extra delay would usually render the transmission useless. On the other hand, the high bandwidth requirements of those applications would often make streaming all users' data within the deadline impossible, thus, a good scheduler has to balance those two goals, intelligently making decisions on how to use the available bandwidth to maximize end users' satisfaction. This motivates the design of resource allocation schemes that jointly account for bandwidth requirements, hard deadlines and applications' priorities in terms of what has to be transmitted to end-users to maintain a seamless experience.\\ 
To model the problem of resource allocation and scheduling for bandwidth-intensive latency-critical applications, we propose approaching the problem as an online scheduling problem, where requests arrive to the BS carrying a payload, a hard deadline, and a concave reward function that rewards successful partial transmission within the prespecified hard deadline. Our motivations is that, in many applications, completing a request partially within a deadline is acceptable. For example in video transmission, frame-dropping and error concealment are used to adapt to lower bandwidths, thus, this fits our model where 1. transmitting a frame after the deadline is useless, 2. the portion of the request completed exhibits a diminishing return. Another example is VR applications and/or $360^{\circ}$ videos where tiles outside field-of-view can be adaptively streamed at a lower rate if needed \cite{hosseini2016adaptive}. A third example is mobile cloud gaming, where the cloud server adaptively transmit most-likely sequences depending on the bandwidth availability \cite{lee2015outatime}, thus, also an example of a high-bandwidth hard deadline application with diminishing returns.\\
Having modeled our problem as an online scheduling problem, the central question becomes ``\textbf{Can we find a constant-competitive solution that has low-complexity?}". Specifically, we are interested in the class of ``\textbf{deadline-oblivious}" algorithms, that make scheduling decisions without taking individual flows' deadline requirements into account. Those algorithms have low complexity, are more amenable to implementation than deadline-aware schedulers, and are robust against deadline information absence or inaccuracy.\\   
We show that the answer to this question is affirmative. Our solution to the problem follows the online primal-dual approach presented in \cite{buchbinder2009design} for online linear programs and used in \cite{zheng2016online} \cite{lucier2013efficient} \cite{devanur2018primal} in the context of datacenter scheduling. The problem of online deadline-sensitive scheduling in wireless networks presents the following unique challenges: 1. Time-varying complex non-orthogonal capacity regions due to the nature of the wireless channel, and a set of power control, coding and MIMO capabilities, that a Base Station (BS) can use to achieve rates within the capacity region. Our problem formulation treats instantaneous capacity region as a time-varying closed convex region with no assumptions on the orthogonality of user rates.  2. Susceptibility of opportunistic scheduling to unfairness, as any utility-maximizing algorithm would prefer users with consistently good channels. We tackle long-term unfairness through stochastic timely-throughput constraints. Our key contributions can be summarized as follows: 
\begin{enumerate}[wide, labelwidth=!, labelindent=0pt]
\setlength{\itemsep}{0.mm}
\item We develop a Primal-Dual Deadline Oblivious (DO) algorithm to solve the problem of scheduling deadline sensitive traffic, and show in Theorem \ref{key}, that our online solution provides a 3.6 competitive ratio compared to the offline prescient solution that has all the information apriori.
\item We show in Theorem \ref{LFDOperf} that the  Primal-Dual algorithm can be modified to satisfy long-term stochastic ``Timely Throughput" constraints. Timely throughput is the amount of traffic delivered to the end user within the allowed deadline over a certain time period. We show that this modification causes minimal sacrifice to performance by utilizing a virtual queue structure and Lyapunov arguments in a novel way.
\item We show via simulations that our algorithm outperforms some well-known algorithms proposed in the literature for deadline-sensitive traffic scheduling. We also show that our algorithm closely tracks the offline optimal solution. Furthermore, we verify the efficacy of the modified Long-term Fair Deadline Oblivious (LFDO) algorithm in satisfying timely throughput constraints.
\end{enumerate} 
Online Scheduling of Deadline-constrained traffic is a classical problem in networking \cite{pruhs2004online}. This problem has received increased recent attention with the proliferation of deadline-sensitive applications in datacenters. A preemptive algorithm that relies on the slackness metric was proposed in \cite{lucier2013efficient}. In \cite{devanur2018primal}, it was shown that online primal-dual algorithms are also energy efficient. Perhaps closest to our setup is the work in \cite{zheng2016online}, where hard-deadlines and partial utilities are considered for multi-resource allocation. We compare our algorithm to the one in \cite{zheng2016online} in the simulation section and show that our algorithm has better performance due to reliance on primal-dual updates rather than only primal updates. The aforementioned works however do not take into account the fundamental challenges of the wireless setup that we have discussed.\\    
In the wireless setting, there has been an increasing interest in deadline-constrained traffic. In particular, the concept of ``timely-throughput" has been proposed and studied extensively \cite{hou2014scheduling} \cite{lashgari2013timely} \cite{shakkottai2002scheduling} for packets with deadlines. However, these works target packet transmissions and do not consider the ``diminishing returns" properties of bandwidth-intensive traffic at the flow level. 

\section{System Model}
\begin{figure}
\centering
\includegraphics[height=2.8cm,width=6cm]{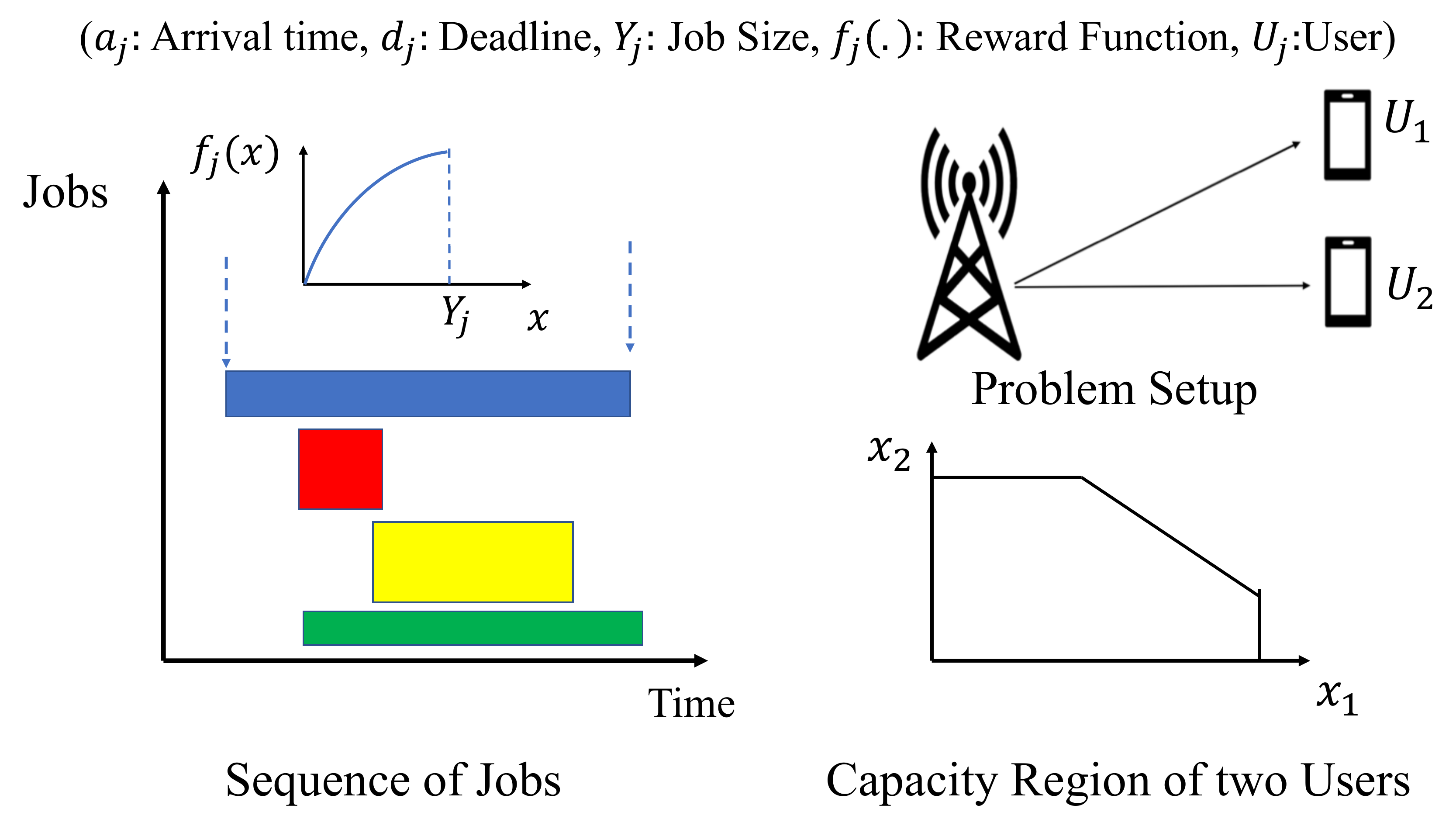}
\caption{System Model}
\vspace{-0.5cm}
\end{figure}

The system model is shown in Fig.1. Every time slot, we model every job/request $j$ arriving at the BS as the tuple $(a_j,d_j,Y_j,f_j(.), U_j)$, representing arrival time, deadline, job size, concave reward function that rewards the amount of the job served $x$ with $f_j(x)$, and an intended user among an available $N$ users, that is, $U_j \in \{1,2, \ldots, N\}$.\\
 At each time slot, $t$, the BS calculates an instantaneous feasible rate region $\mathbf{R}[t]$, based on the CSI feedback. The feasible rate region determines the rates that the BS can allocate to different users at each time slot. We do not make any assumptions on $\mathbf{R}[t]$, except that it is closed, bounded, and convex. We model the feasible rate regions over time in this way to capture both the time variability characteristic of wireless networks as well as the BS capabilities to employ power control, coding, and MIMO to extend the rate region beyond the simple orthogonal capacity region (see for example \cite{dai2015non}). We remark that this assumption changes the problem significantly from the typical datacenter job-resource pairing (e.g. \cite{zheng2016online}), where the capacity is assumed to be orthogonal with no time-variation. \\
 Each job $j$ is active between its arrival time, $a_j$, and its deadline $d_j$, after which the job expire and no reward would be gained from transmitting it. At each time slot $t$, each active job $j$ is allocated a rate $x_{tj}$. We use the variable $A_{tj}$ as an indicator of whether a job $j$ is active at time $t$. We collect those indicators at time $t$ in a diagonal matrix that we refer to as $\mathbf{A}_t$. We denote all the jobs that arrive over the problem horizon by the set $J$, and all rates given to all jobs at time $t$ by $\mathbf{x}_t=(x_{t1}, x_{t2}, \ldots, x_{tJ})$.\\
 We assume that utility functions $f_j(.)$ are continuous, strictly concave, non-decreasing, and differentiable with a gradient $\partial f_j(.)$ and $f_j(0)=0$ for all jobs $j$. This captures the diminishing return properties of the job service. With some abuse of notation we will refer to the vector of the gradients of all functions as $\nabla f(\,)=( \partial f_1(\,), \partial f_2(\,), \ldots, \partial f_J(\,))$.
 
 \section{Problem Formulation} 
 We model the problem as a finite-horizon online convex optimization problem aiming to maximize the total utility obtained from the total resources received by each job prior to expiry. Formally:
 
 \begin{small}
 \vspace{-0.2cm}
\begin{subequations}\label{formulation}
        \begin{align}
    & \underset{\mathbf{x}_1, \mathbf{x}_2, \ldots,\mathbf{x}_t}{\max}
    & &  \sum_{j \in  J} f_j(\sum_{t=1}^T A_{tj} x_{tj}) \label{objective} \\
    & \text{subject to}
    & & \sum_{t=1}^T x_{tj} \leq Y_j, \;  \forall j \label{budget} \\ 
    &&& \mathbf{x}_t \in \mathbf{R}[t], \;  \;  \forall t= 1,2, \ldots, T. \label{capacity}
\end{align}
    \end{subequations}
    \end{small}   
  The objective function \eqref{objective} is the utility achieved by each job, due to the sum of resources allocated to that job over its activity window. The constraint \eqref{budget} ensures jobs are not allocated more than their size. The constraint \eqref{capacity} ensures that the rates allocated by the BS are feasible w.r.t the rate region estimated from the CSI feedback. Technically, this constraint should be on the users rates, not on the jobs. However, it is easy to transform the constraints on users' sum rates to constraints on individual jobs, since every job has a single intended user.\\ 
  Our performance metric throughout will be the \textit{Competitive Ratio (CR)}. The Competitive Ratio, $\gamma$, guarantees that the online algorithm always achieves at least, a $\frac{1}{\gamma}$ fraction of the total reward achieved by an optimal offline prescient solution that knows all jobs' details before-hand as well as all the rate regions, independent of the problem size. Denote the total reward achieved by an online algorithm as $P=\sum_{j \in  J} f_j(\sum_{t=1}^T A_{tj} x_{tj})$. We call the offline optimal algorithm OPT, and denote the total reward achieved by OPT as $P^{*}=\sum_{j \in  J} f_j(\sum_{t=1}^T A_{tj} x_{tj}^{*})$
  \begin{definition}{Competitive Ratio: } An online algorithm is $\gamma$-competitive if the following holds:
  \begin{small}
\begin{equation}
\gamma \leq \sup_{\mathbf{S}_j, R[1], R[2], \ldots, R[T]} \frac{P^{*}}{P}
\end{equation}
  \end{small}
  where $\mathbf{S}_j$ is the input job sequence over all slots.
  \end{definition}
  \subsection*{Dual Problem}
  Since our solution is based on simultaneously updating the primal and dual solutions, we start by deriving the dual optimization problem:
\begin{small}
\begin{subequations}\label{formulation}
        \begin{align}
    & \underset{\alpha, \beta}{\min}
    & &  \sum_{t=1}^T \underset{\mathbf{x}_t \in \mathbf{R}[t]}{\text{max}} <A_t \mathbf{\alpha} -\mathbf{\beta},\mathbf{x}> + \mathbf{\beta}^TY -\sum_{j=1}^J f_j^*(\alpha_j)\label{dual} \\
    & \text{subject to}
    & & \mathbf{\alpha}, \mathbf{\beta} \geq 0  
\end{align}
\end{subequations}
\end{small}
where $\mathbf{\alpha}=[\alpha_1, \ldots, \alpha_J]$ is the $J\times 1$ Fenchel Dual vector , $\mathbf{\beta}=[\beta_1, \ldots, \beta_J]$ is the $J\times1$ multiplier of the constraint \eqref{budget}, and $Y=[Y_1, \ldots, Y_J]$. The operator $<\, , \,>$ is the inner product operator. The function $f_j^*(\alpha_j)$ is the cocave conjugate of the function $f_j(\,)$ \cite{boyd2004convex}, which can be written as:

\begin{small}
\begin{equation}\label{conj1}
f^*_j(\alpha_j)= \inf_{x \geq 0} <\alpha_j,x> - f_j(x)
\end{equation} 
\end{small}
A solution $(\mathbf{x},\mathbf{\alpha}, \mathbf{\beta})$ is a primal-dual solution if and only if:
\begin{small}
\begin{equation*} 
\mathbf{x}_t =  \underset{\mathbf{x} \in R[t]}{\text{argmax}} <A_t \mathbf{\alpha} -\mathbf{\beta},\mathbf{x}>, \, \, \,\,  \, \,\,
\alpha_j =  \partial(f_j(\sum_{t=1}^T A_{tj} x_{tj})).
\end{equation*}
\end{small}
To derive a Competitive Ratio bound for our algorithm, we use the following theorem on primal and dual problems:
\begin{theorem}{(Weak and Strong Duality \cite{boyd2004convex})} \label{dualityThm}
Let $(\mathbf{x}_1, \ldots, \mathbf{x}_{T})$ and $(\mathbf{\alpha}, \mathbf{\beta})$ be feasible solutions for the Primal and the Dual problems respectively, then the following holds:
\begin{small}
\begin{equation}
D=\sum_{t=1}^T \sigma_t(\mathbf{\alpha},\mathbf{\beta}) + \mathbf{\beta}^TY -\sum_{j=1}^J f_j^*(\alpha_j) \geq \sum_{j \in  J} f_j(\sum_{t=1}^T A_{tj} x_{tj}) =P
\end{equation}
\end{small}
where $\sigma_t(\mathbf{\alpha},\mathbf{\beta}) =\underset{\mathbf{x} \in \mathbf{R}[t]}{\text{max}} <A_t \mathbf{\alpha} -\mathbf{\beta},\mathbf{x}>$.\\
For the optimal offline Primal and Dual solutions, assuming strong duality, the following holds:
\begin{small}
\begin{equation}
D\geq  D^*=P^* \geq P \label{duality}
\end{equation}
\end{small}
\end{theorem}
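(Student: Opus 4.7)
Since this is the standard weak/strong Lagrangian duality statement, the plan is to derive the dual displayed in (4) directly from the primal by Lagrangian construction, and then read off the two inequalities. The main step is handling the coupling of the concave objective through the sums $\sum_t A_{tj} x_{tj}$; I would decouple it by introducing auxiliary variables $z_j = \sum_t A_{tj} x_{tj}$ and rewrite the primal as $\max \sum_j f_j(z_j)$ subject to $z_j \leq \sum_t A_{tj} x_{tj}$, the budget constraint \eqref{budget}, and $\mathbf{x}_t \in \mathbf{R}[t]$. (Because each $f_j$ is non-decreasing, replacing the equality with ``$\leq$'' does not change the optimum but keeps the multiplier sign convention clean.)

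Next, I would form the Lagrangian
\[
L(\mathbf{x},\mathbf{z},\alpha,\beta) \;=\; \sum_j f_j(z_j) + \sum_j \alpha_j\!\Bigl(\sum_t A_{tj} x_{tj} - z_j\Bigr) + \sum_j \beta_j\!\Bigl(Y_j - \sum_t x_{tj}\Bigr),
\]
with $\alpha,\beta\geq 0$, and maximize separately over the unconstrained $z_j$ and over $\mathbf{x}_t\in\mathbf{R}[t]$. The $z_j$ maximization gives $\sup_{z_j\geq 0}(f_j(z_j)-\alpha_j z_j) = -f_j^*(\alpha_j)$ by the definition in \eqref{conj1}. The $\mathbf{x}$ maximization splits across time because the only coupling is through the rate-region constraints, and for each $t$ yields $\max_{\mathbf{x}\in\mathbf{R}[t]} \langle A_t\alpha - \beta,\mathbf{x}\rangle = \sigma_t(\alpha,\beta)$. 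Adding the leftover $\beta^T Y$ recovers exactly the dual objective $D$.

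The weak-duality half of the theorem then follows by a one-line sandwich: for any primal-feasible $(\mathbf{x},\mathbf{z})$ the equality slack $\sum_j \alpha_j(\sum_t A_{tj}x_{tj} - z_j)$ is nonnegative (or zero after setting $z_j=\sum_t A_{tj}x_{tj}$) and the budget slack $\sum_j \beta_j(Y_j-\sum_t x_{tj})$ is nonnegative because $\beta\geq 0$ and \eqref{budget} holds, so $P=\sum_j f_j(z_j)\leq L(\mathbf{x},\mathbf{z},\alpha,\beta)\leq D$. For the strong-duality chain $D\geq D^*=P^*\geq P$, I would invoke Slater's condition on the convex program \eqref{formulation}: the objective is concave, \eqref{budget} is linear, and $\mathbf{R}[t]$ is closed convex (with $\mathbf{0}$ in its interior in any nondegenerate wireless setting, e.g., by including a small ball around the origin), so a strictly feasible point exists and $D^*=P^*$. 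The remaining inequality $P^*\geq P$ is just optimality of the offline primal.

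The only real subtlety, and thus the main thing to flag, is the appeal to Slater: one must argue that $\mathbf{R}[t]$ has nonempty relative interior containing a point at which \eqref{budget} is strict, which is immediate here since $\mathbf{x}_t=\mathbf{0}$ is feasible and lies in $\mathbf{R}[t]$, and the budget constraints are slack at zero. Everything else is routine Lagrangian bookkeeping, and the full statement can in fact be cited directly from \cite{boyd2004convex}.
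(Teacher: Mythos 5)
Your derivation is correct and is precisely the standard Lagrangian argument that the paper itself never spells out --- the theorem is stated with only a citation to \cite{boyd2004convex}, and your auxiliary-variable decoupling $z_j=\sum_t A_{tj}x_{tj}$, the identification of $\sup_{z\ge 0}(f_j(z)-\alpha_j z)=-f_j^*(\alpha_j)$ via \eqref{conj1}, and the per-slot split into the support terms $\sigma_t(\mathbf{\alpha},\mathbf{\beta})$ reconstruct exactly the dual \eqref{dual} that the paper writes down, with the weak-duality sandwich and the chain $D\geq D^*=P^*\geq P$ following as you describe. The only substantive point is the one you already flag: invoking Slater requires a strictly feasible point (e.g.\ $\mathbf{0}\in\mathbf{R}[t]$ with the affine budget constraints slack), which the paper assumes implicitly but never states; note also that only weak duality is actually used downstream in bounding the competitive ratio, so this caveat does not affect the main results.
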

This gives us a method to bound the competitive ratio of any primal-dual online algorithm by showing that $D \leq \gamma P$, which implies that $P^* \leq D \leq \gamma P$. This technique is covered in depth for online linear programs in \cite{buchbinder2009design} (e.g. Theorem 2.3), for many applications. We use the same idea to analyze our online algorithm presented in the next section.
\section{Deadline Oblivious (DO) Algorithm}
\subsection{Algorithm}
Before presenting our algorithm, we give some intuition on how we developed it. It is useful to think of our problem as an online fractional matching problem with edge weights on a bipartite graph. One side of the graph are the jobs, and on the other side are the time slots. Each time slot brings new information on the capacity, edge weights, and utility functions. It is well known that for the simplest online matching problem with linear rewards, there exists an $e-1$-competitive Primal-Dual algorithm that outperforms the simple Greedy Algorithm that is $2$-competitive \cite{mehta2013online}. Later, this framework was extended for concave reward functions for covering/packing problems \cite{azar2016online}, and for online matching problems \cite{eghbali2016designing}. In fact, our algorithm builds on the  algorithm presented in \cite{eghbali2016designing} for online matching with capacity constraints only and no job size constraints. We develop a complete resource allocation algorithm for deadline sensitive traffic with job sizes constraints, as well as tackling the long-term stochastic constraints.
\begin{small}
\begin{algorithm}
\newcommand{\forcond}{$t=1$ \KwTo $T$}
\SetKw{Kw}{Initialize:}
\Kw{ At $t=0$, set $\beta_{tj}=0, \; \forall j$}\\ 
\For{\forcond}{
BS receives new jobs arriving at time $t$, and calculates $\mathbf{R}[t]$ \\
Calculate the pair $(\mathbf{\alpha}_t, \mathbf{x}_t)$ that solves the following saddle point problem:
\begin{small}
\begin{equation*}
\underset{\mathbf{\alpha} \geq 0}{\text{min}} \,
 \underset{\mathbf{x} \in \mathbf{R}[t]}{\text{max}} -f^{*}(\alpha) +<\alpha -\beta_{t-1}, \sum_{s=1}^{t-1} \mathbf{A}_s \mathbf{x}_s+\mathbf{A}_t\mathbf{x}>
\end{equation*}
\end{small}
Update the dual variable for every job $\mathbf{\beta}_{tj}$ as follows:
\begin{small}
\begin{multline*}
\beta_{tj}= \frac{\partial f (\sum_{s=1}^{t} A_{sj} x_{sj})}{\partial f (\sum_{s=1}^{t-1} A_{sj} x_{sj})}\bigg(1 + \frac{A_{tj}x_{tj}}{Y_j}\bigg) \beta_{t-1j}\\
 + \frac{\partial f (\sum_{s=1}^{t} A_{sj} x_{sj}) A_{tj} x_{tj}}{(C-1) Y_j}
\end{multline*}
\end{small}
}
 \caption{Deadline Oblivious (DO) Algorithm}
\end{algorithm}
\end{small}

The algorithm continuously allocates resources to active jobs by controlling $\mathbf{x}_{t}$, and updates the per-job dual variables $\mathbf{\alpha}_t=[\alpha_{t1}, \ldots, \alpha_{tJ}]$, and $\mathbf{\beta}_t=[\beta_{t1}, \ldots, \beta_{tJ}]$ every time slot accordingly. Line 4 of the algorithm jointly allocates the primal and dual variables by solving a low complexity saddle point problem. We will later show how to use approximation to further reduce the complexity of the problem. Line 5 updates the dual variable $\mathbf{\beta}$ that ensures that no job is allocated more resources than its size. This discounts the reward obtained from any job as it gets closer to completion, hence, this discounting gives priority to jobs that have more work remaining. Note that the instantaneous primal and dual allocations of all jobs do not use the knowledge of the activity window after time $t$. Since the algorithm is deadline oblivious, decisions only depend on current activity of a job and do not take into account the future activity until the deadline.\\
We define the capacity-to-file-size ratio, $F_{\max}$, as the maximum ratio between the resources any job can receive at any one time slot and the total job size. We assume, that $F_{\max}>1$, i.e., no job can be fully transmitted over one time-slot. This assumption is essential to obtain a constant competitive ratio. This is equivalent to the ``bid-to-budget" ratio assumption in online matching problems \cite{mehta2013online}. Also, let $C$ in line 5 of the algorithm be $C=(1+F_{\max})^{\frac{1}{F_{\max}}}$. Note that as $F_{\max}$ approaches zero, $C$ approaches $e$, which will be useful when we derive the competitive ratio. 
\subsection{Analysis} \label{do_analysis}
In the next few Lemmas, we will show that the DO algorithm has some useful properties that enable us to derive a relationship between the primal and dual objectives. We first define a complementary pair
\begin{definition}
$\mathbf{x}$ and $\mathbf{\alpha}$ are said to be a Complementary Pair if any one of those properties hold (It can be shown that they are all equivalent)
\begin{equation*}
f'(x)=\alpha, \, \, \, f^{*'}(\alpha)=x,  \, \, \,f(x)+f^*(\alpha)=x \alpha,
\end{equation*}

where $f^*(\alpha)$ is the concave conjugate  defined in \eqref{conj1}.
\end{definition}
\begin{lemma}\label{properties}
DO produces a primal-dual solution $(\mathbf{x,\alpha, \beta})$ that guarantees the following for all time slots:
\begin{enumerate}
\item $(\alpha_{tj}, \sum_{s=1}^t A_{sj}x_{sj})$ are a complementary pair for all time slots $t$, and for all jobs $j \in J$, i.e., $\alpha_{tj} \in \partial f_j(\sum_{s=1}^t A_{sj}x_{sj}) $
\item $\mathbf{x_t} \in \underset{\mathbf{x} \in \mathbf{R}[t]}{\text{argmax}} <\alpha -\beta, \sum_{s=1}^{t-1} \mathbf{A}_s \mathbf{x}_s+\mathbf{A}_t\mathbf{x}>$
\end{enumerate}
\end{lemma}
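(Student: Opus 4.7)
The plan is to derive both claims directly from the first-order optimality conditions of the saddle-point problem that Line 4 of the DO algorithm explicitly solves to produce $(\alpha_t, \mathbf{x}_t)$.

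Property (2) I would dispose of first, almost by inspection. Writing the Lagrangian of Line 4 as $L(\alpha, \mathbf{x}) = -f^*(\alpha) + \langle \alpha - \beta_{t-1}, \sum_{s=1}^{t-1} \mathbf{A}_s \mathbf{x}_s + \mathbf{A}_t \mathbf{x}\rangle$, I observe that $-f^*(\alpha)$ is independent of $\mathbf{x}$ and that $\sum_{s=1}^{t-1} \mathbf{A}_s \mathbf{x}_s$ is a constant with respect to $\mathbf{x}$. Hence the inner maximization in Line 4 coincides with the argmax asserted in property (2), once the notation is unpacked: the $\alpha$ and $\beta$ appearing in the lemma statement are the dual values $\alpha_t$ and $\beta_{t-1}$ in force when Line 4 is executed.

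For property (1) I would turn to the outer minimization over $\alpha \geq 0$. Because the primal reward is separable across jobs, so is its concave conjugate, giving $f^*(\alpha) = \sum_j f_j^*(\alpha_j)$, and the outer problem decouples coordinate-wise. Stationarity at the $j$-th coordinate of $\alpha_t$ yields the first-order condition $(f_j^*)'(\alpha_{tj}) = \sum_{s=1}^t A_{sj} x_{sj}$. I then invoke the Fenchel--Legendre identity for a strictly concave differentiable $f_j$: the relation $(f_j^*)'(\alpha) = x$ is equivalent to $f_j'(x) = \alpha$. Substituting produces $\alpha_{tj} = \partial f_j\bigl(\sum_{s=1}^t A_{sj} x_{sj}\bigr)$, which is one of the three equivalent characterizations of a complementary pair given in the definition above.

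The only subtlety, and the closest thing to an obstacle, is verifying that the non-negativity constraint $\alpha \geq 0$ is not active, so that the stationarity condition holds with equality rather than as a full KKT condition with a nonzero multiplier. This is immediate from the model assumptions: each $f_j$ is non-decreasing, so $f_j'(\cdot) \geq 0$ everywhere, which makes the unconstrained stationary point automatically feasible. Strict concavity and differentiability of $f_j$ further guarantee $f_j^*$ is differentiable on the interior of its domain, so the first-order condition is unambiguous. The lemma is thus essentially a structural identity flowing from the definition of the algorithm; the real analytic work will come in the subsequent lemmas, where these two identities are combined with the $\beta$ update in Line 5 to bound the dual objective by a constant multiple of the primal.
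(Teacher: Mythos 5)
Your proposal is correct and follows essentially the same route as the paper, which simply states that the lemma is immediate from the concave-conjugate properties and the inner maximization in Line~4 of the algorithm. You have merely written out in full the first-order/Fenchel-duality argument that the paper leaves implicit, including the (correct) observation that $\alpha\ge 0$ is non-binding because each $f_j$ is non-decreasing.
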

The Proof of the Lemma is immediate from the properties of the concave-conjugate property and the inner maximization problem in line 4 of the algorithm. The next two Lemmas ensures that DO produces a feasible primal-dual solution\\
\begin{lemma} \label{geom}
For any job $j$, the dual variable $\beta_{tj}$ grows as a geometric series that can be bounded from below as follows
\begin{small}
\begin{equation}
\beta_{tj} \geq \frac{\partial f( \sum_{s=0}^t A_{sj} x_{sj})}{C-1} (C^{\frac{\sum_{s=0}^t A_{sj} x_{sj}}{Y_j}}-1)
\end{equation}
\end{small}
\end{lemma}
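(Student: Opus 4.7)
The plan is to prove this by induction on $t$, starting from the dual update in line 5 of the algorithm. Write $z_{tj} := \sum_{s=1}^{t} A_{sj} x_{sj}$ (the $s=0$ term in the statement contributes zero, so the two conventions agree), $\delta_{tj} := A_{tj} x_{tj}$, and let $B_{tj} := \frac{\partial f_j(z_{tj})}{C-1}\bigl(C^{z_{tj}/Y_j} - 1\bigr)$ denote the target lower bound. The base case $t=0$ is immediate: the algorithm initializes $\beta_{0j}=0$ and $z_{0j}=0$ forces $B_{0j}=0$.

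For the inductive step, assume $\beta_{t-1,j} \geq B_{t-1,j}$. The multiplicative factor $\frac{\partial f_j(z_{tj})}{\partial f_j(z_{t-1,j})}(1 + \delta_{tj}/Y_j)$ that appears in the recursion is nonnegative, so substituting the inductive hypothesis yields
\begin{equation*}
\beta_{tj} \;\geq\; \frac{\partial f_j(z_{tj})}{C-1}\Bigl[(1 + \delta_{tj}/Y_j)\bigl(C^{z_{t-1,j}/Y_j} - 1\bigr) + \delta_{tj}/Y_j\Bigr].
\end{equation*}
Expanding the bracket, the linear contributions in $\delta_{tj}/Y_j$ cancel out and leave $C^{z_{t-1,j}/Y_j}(1 + \delta_{tj}/Y_j) - 1$. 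Using $z_{tj} = z_{t-1,j} + \delta_{tj}$ so that $C^{z_{tj}/Y_j} = C^{z_{t-1,j}/Y_j}\cdot C^{\delta_{tj}/Y_j}$, matching this against $B_{tj}$ reduces the inductive step to the scalar inequality $1 + y \geq C^y$ evaluated at $y = \delta_{tj}/Y_j$.

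The main (though short) obstacle is verifying this scalar inequality on the relevant range. By the capacity-to-file-size assumption we have $y = A_{tj}x_{tj}/Y_j \in [0, F_{\max}]$, and $C = (1+F_{\max})^{1/F_{\max}}$ was defined precisely so that $C^{F_{\max}} = 1 + F_{\max}$. The function $y \mapsto C^y$ is convex in $y$ and coincides with the affine map $1 + y$ at both endpoints $y=0$ and $y=F_{\max}$; convexity then forces $C^y \leq 1 + y$ throughout $[0, F_{\max}]$, since a convex function lies below any of its chords. This closes the induction and yields the claimed geometric lower bound on $\beta_{tj}$.
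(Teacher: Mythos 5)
Your proof is correct and follows essentially the same induction as the paper's: substitute the inductive hypothesis into the update recursion, collect terms to reach $C^{z_{t-1,j}/Y_j}(1+\delta_{tj}/Y_j)-1$, and reduce the step to the scalar inequality $C^{y}\leq 1+y$ on $[0,F_{\max}]$. The only cosmetic difference is that you verify this last inequality via convexity of $y\mapsto C^{y}$ and its chord through $y=0$ and $y=F_{\max}$, whereas the paper invokes the monotonicity of $\log(1+x)/x$; the two justifications are equivalent.
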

\begin{proof}
We prove the Lemma by induction. The base case is $t=0$, where $\beta_{tj} \geq 0$ is trivially satisfied. Suppose the claim is true for $t-1$, then substituting in the update equation in Algorithm 1, line 5, we obtain the following:

\begin{small}
\begin{align}
\begin{split}
    \beta_{tj}= {}&  \frac{\partial f (\sum_{s=1}^{t} A_{sj} x_{sj})}{\partial f (\sum_{s=1}^{t-1} A_{sj} x_{sj})}\bigg(1 + \frac{A_{tj}x_{tj}}{Y_j}\bigg) \beta_{t-1j}\\
    & + \frac{\partial f (\sum_{s=1}^{t} A_{sj} x_{sj}) A_{tj} x_{tj}}{(C-1) Y_j}
\end{split}\\
     \overset{\mathrm{(a)}}{\geq} {}& \frac{\partial f (\sum_{s=1}^{t} A_{sj} x_{sj})}{C-1}\bigg(C^{\frac{\sum_{s=0}^{t-1} A_{sj} x_{sj}}{Y_j}}\big(1+\frac{A_{tj}x_{tj}}{Y_j}\big)-1\bigg)\\
    \overset{\mathrm{(b)}}{\geq} {}& \frac{\partial f (\sum_{s=1}^{t} A_{sj} x_{sj})}{C-1}\bigg(C^{\frac{\sum_{s=0}^{t} A_{sj} x_{sj}}{Y_j}}-1\bigg),
\end{align}
\end{small}
\noindent where (a) is from the induction hypothesis and (b) follows the inequality $\frac{\log(1+y)}{y} \leq\frac{\log(1+x)}{x}$ when $y \geq x$, and we have chosen $F_{\max} \geq \frac{A_{tj}x_{tj}}{Y_j}, \forall j, \, \forall t$.
\end{proof}

\begin{lemma}{(Properties of DO)}\label{props}
DO produces a primal solution $[x_{tj}], \forall j \in J$, and a dual solution $(\alpha_{tj}, \beta_{tj}),  \forall j \in J$, for all time slots $t$, with the following properties:
\begin{enumerate}
\item The dual solution is feasible for all jobs at all time-slots:
\begin{small}
\begin{eqnarray}
\alpha_{tj} \geq 0, \forall j \in J, \forall t=1, 2, \ldots, T \label{dual1}\\
 \beta_{tj} \geq 0, \forall j \in J, \forall t=1, 2, \ldots, T \label{dual2}
\end{eqnarray}
\end{small}
\item The Primal solution is almost feasible for all jobs at all time slots. The following conditions are satisfied:
\begin{small}
\begin{eqnarray}
\mathbf{x}_t \in \mathbf{R}[t], \forall t=1, 2, \ldots, T \label{primal1} \\
\sum_{t=1}^T x_{tj}  \leq Y_j(1+F_{\max}), \forall j \in J \label{primal2}
\end{eqnarray}
\end{small}
\end{enumerate}
\end{lemma}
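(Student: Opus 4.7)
The plan is to split the four assertions by difficulty, treating the budget bound \eqref{primal2} as the substantive piece and the other three as essentially immediate consequences of earlier results.

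First, I would dispose of dual feasibility. Claim \eqref{dual1} is immediate from Lemma \ref{properties}(1), which identifies $\alpha_{tj}$ with $\partial f_j$ evaluated at a non-negative argument; since $f_j$ is non-decreasing and differentiable with $f_j(0)=0$, this gradient is non-negative. Claim \eqref{dual2} follows by induction on $t$: starting from $\beta_{0j}=0$, the update in line 5 of Algorithm~1 is an affine combination of $\beta_{t-1,j}$ with non-negative coefficients plus a non-negative additive term (each factor being a product/ratio of non-negative quantities), so $\beta_{tj}$ remains $\ge 0$. For the rate-region constraint \eqref{primal1}, $\mathbf{x}_t$ is by construction a solution of the saddle-point problem in line 4, whose inner maximization is taken over $\mathbf{R}[t]$, so $\mathbf{x}_t\in\mathbf{R}[t]$.

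The heart of the argument is \eqref{primal2}. For a fixed job $j$, I would let $t^*$ denote the first time slot at which the cumulative allocation strictly exceeds $Y_j$ (if no such slot exists the bound is trivial). By minimality of $t^*$ one has $\sum_{s=1}^{t^*-1}A_{sj}x_{sj}\le Y_j$, and by the definition of $F_{\max}$ a single-slot allocation satisfies $A_{t^*j}x_{t^*j}\le F_{\max}Y_j$, so $\sum_{s=1}^{t^*}A_{sj}x_{sj}\le Y_j(1+F_{\max})$. It then suffices to prove $x_{tj}=0$ for every $t>t^*$. At any such $t$ the ratio $\sum_{s=1}^{t-1}A_{sj}x_{sj}/Y_j\ge 1$, so Lemma \ref{geom} collapses to
\begin{equation*}
\beta_{t-1,j}\;\ge\;\frac{\partial f_j\!\bigl(\textstyle\sum_{s=1}^{t-1}A_{sj}x_{sj}\bigr)}{C-1}\,(C-1)\;=\;\partial f_j\!\bigl(\textstyle\sum_{s=1}^{t-1}A_{sj}x_{sj}\bigr),
\end{equation*}
while Lemma \ref{properties}(1) together with the concavity (hence non-increasing gradient) of $f_j$ yields $\alpha_{tj}=\partial f_j(\sum_{s=1}^{t}A_{sj}x_{sj})\le\beta_{t-1,j}$. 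Consequently, the coefficient $(\alpha_{tj}-\beta_{t-1,j})A_{tj}$ of $x_{tj}$ in the linear inner-max objective of line 4 is non-positive, and a maximizer over the convex rate region can be chosen with $x_{tj}=0$.

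The main obstacle I foresee is this very last step. For a linear objective over a general closed convex set $\mathbf{R}[t]$, a non-positive coefficient on $x_{tj}$ does not by itself force $x_{tj}=0$ at every maximizer, and with coefficient equal to zero a positive allocation is still consistent with optimality. I would resolve this by invoking downward-closedness of wireless rate regions (if $\mathbf{x}\in\mathbf{R}[t]$ then so is any componentwise smaller non-negative vector), a standard structural property that allows replacing any maximizer by one with $x_{tj}=0$ without decreasing the inner objective; equivalently, one can fold this tie-breaking rule directly into the specification of Algorithm~1. A small secondary check needed along the way is the monotonicity of $y\mapsto (C^y-1)/(C-1)$ used to collapse Lemma \ref{geom} at $\sum/Y_j\ge 1$, which is elementary.
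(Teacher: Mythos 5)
Your proposal is correct and follows essentially the same route as the paper's own proof: dual feasibility and $\mathbf{x}_t\in\mathbf{R}[t]$ read off from the algorithm, and the budget bound \eqref{primal2} obtained by combining Lemma \ref{geom} (once the cumulative allocation reaches $Y_j$, $\beta_{t-1,j}\ge\partial f_j(\cdot)\ge\alpha_{tj}$) with the observation that a single crossing step overshoots by at most $F_{\max}Y_j$. The one place you go beyond the paper is the caveat you raise at the end: the paper likewise only asserts that $x_{tj}=0$ ``is optimal'' when the coefficient $\alpha_{tj}-\beta_{t-1,j}$ is non-positive, without arguing that the algorithm's maximizer actually selects it, so your proposed fix (downward-closedness of $\mathbf{R}[t]$ or an explicit tie-breaking rule in line 4) addresses a genuine, if minor, gap present in the original argument as well.
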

We say that the solution is ``almost feasible" since the job size constraint can be slightly violated as seen in \eqref{primal2}. In particular, allocations of a job can exceed the job size by $F_{\max}$, which we assume to be small. We can easily obtain a feasible solution by multiplying all allocations $x_{tj}$ by $(1-F_{\max})$.

\begin{proof}
\eqref{dual1} is straightforward, since by line 4 in the algorithm, $\mathbf{\alpha} \geq 0$. \eqref{dual2} can be shown by noticing that for any job $j$, $\beta_{tj}$ is a non-decreasing geometric series that starts from $0$, thus, $\beta_{tj} \geq 0, \, \forall j \, \forall t$. \eqref{primal1} is also guaranteed by the choice of $\mathbf{x}_t$ by line 4 in the algorithm. \eqref{primal2} is a consequence of Lemma \ref{properties} and Lemma \ref{geom}. Given that a job is completely served, i.e., $\sum_{s=1}^t A_{tj}x_{tj} \geq Y_j$, Lemma \ref{geom} guarantees it's dual variable $\beta_{tj} \geq \partial (f(\sum_{s=1}^t A_{tj}x_{tj} ))$. Lemma \ref{properties} tells us that $\alpha_{tj}= \partial f(\sum_{s=1}^t A_{tj}x_{tj}) \leq \beta_{tj}$. Since DO tries to maximize the inner product $<\alpha -\beta, \sum_{s=1}^{t-1} A_s \mathbf{x}_s+A_t\mathbf{x}>$, having $\alpha_{tj} \leq \beta_{tj}$ implies that $x_{tj}=0$ is optimal. It follows that when a job is completely served, no resources are allocated to that job from thereon. There can only be one iteration where a job can be served over its size, bounding that excess resources by $F_{\max}Y_j$ concludes the Lemma.
\end{proof}
To prove a competitive ratio bound, we will bound the Dual cost in terms of the Primal reward using the next key theorem, and then use the weak duality in Theorem \ref{dualityThm} to obtain our main result.

\begin{theorem} {(Key Theorem)} \label{key}
The dual cost given the Primal-Dual online solution obtained by DO can be bounded as follows:
\begin{small}
\begin{align}
D =&\sum_{t=1}^T \sigma_t(\mathbf{A}_t^T \mathbf{\alpha}_T -\mathbf{\beta}_T) + \mathbf{\beta}^T_{T}Y -\sum_{j=1}^J f_j^*(\alpha_{Tj})\label{terms} \\
& \leq P+P+P\bigg(1+\frac{1}{C-1}\bigg)= P\bigg(3+\frac{1}{C-1}\bigg)
\end{align}
\end{small}
\end{theorem}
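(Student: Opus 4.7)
The plan is to bound each of the three summands of
\begin{equation*}
D = \sum_{t=1}^T \sigma_t(\mathbf{A}_t \mathbf{\alpha}_T - \mathbf{\beta}_T) + \mathbf{\beta}_T^{\top} Y - \sum_{j=1}^J f_j^*(\alpha_{Tj})
\end{equation*}
separately, showing the first two are each at most $P$ while the third is at most $P\bigl(1 + \tfrac{1}{C-1}\bigr)$; adding them yields the claim. Writing $S_{tj}:=\sum_{s\le t}A_{sj}x_{sj}$, the Fenchel-conjugate term is the easiest: by Lemma \ref{properties}(1), $(\alpha_{Tj}, S_{Tj})$ is a complementary pair, so $f_j^*(\alpha_{Tj}) = \alpha_{Tj} S_{Tj} - f_j(S_{Tj})$; since $\alpha_{Tj}, S_{Tj}\ge 0$, this gives $-f_j^*(\alpha_{Tj}) \le f_j(S_{Tj})$, and summing over $j$ yields $-\sum_j f_j^*(\alpha_{Tj}) \le P$.

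For the envelope sum $\sum_t \sigma_t$, I would exploit the monotonicity of the dual iterates: $\alpha_{tj}=\partial f_j(S_{tj})$ is nonincreasing in $t$ by concavity of $f_j$ and monotonicity of $S_{tj}$, while $\beta_{tj}$ is nondecreasing in $t$ by the update in line~5. Hence $\mathbf{A}_t \mathbf{\alpha}_T - \mathbf{\beta}_T \le \mathbf{A}_t \mathbf{\alpha}_t - \mathbf{\beta}_{t-1}$ coordinatewise, and since $\mathbf{R}[t]$ lies in the nonnegative orthant this dominance transfers to the maximum. Lemma \ref{properties}(2) then gives $\sigma_t(\mathbf{A}_t \mathbf{\alpha}_T - \mathbf{\beta}_T) \le \langle \mathbf{A}_t \mathbf{\alpha}_t - \mathbf{\beta}_{t-1}, \mathbf{x}_t\rangle \le \sum_j A_{tj}\alpha_{tj}x_{tj}$. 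Summing in $t$ and applying the concavity inequality $\partial f_j(S_{tj})(S_{tj}-S_{t-1,j}) \le f_j(S_{tj}) - f_j(S_{t-1,j})$ telescopes to $\sum_j f_j(S_{Tj}) = P$.

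The delicate piece is the budget term $\mathbf{\beta}_T^{\top}Y$. Introducing the normalized variable $\phi_{tj}:=\beta_{tj}/\partial f_j(S_{tj})$ collapses the line~5 update to the clean recursion $\phi_{tj} = \bigl(1+A_{tj}x_{tj}/Y_j\bigr)\phi_{t-1,j} + A_{tj}x_{tj}/\bigl((C-1)Y_j\bigr)$ with $\phi_{0,j}=0$. Telescoping $\mathbf{\beta}_T^{\top}Y = \sum_t(\mathbf{\beta}_t-\mathbf{\beta}_{t-1})^{\top}Y$ and substituting, the per-slot increment splits into a multiplicative piece driven by $\beta_{t-1,j}$, which combined with the almost-feasibility bound $S_{Tj}\le Y_j(1+F_{\max})$ of Lemma \ref{props} telescopes through the same concavity inequality to $\le P$, and an additive piece $\tfrac{1}{C-1}\sum_j\sum_t\partial f_j(S_{tj})A_{tj}x_{tj}$ which telescopes to at most $P/(C-1)$. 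Adding these gives $\mathbf{\beta}_T^{\top}Y \le P\bigl(1+\tfrac{1}{C-1}\bigr)$. This third bound is the main obstacle, because the compounded geometric growth of $\beta_{tj}$ has to be absorbed into the primal reward without introducing loss beyond $1/(C-1)$; the specific design $C=(1+F_{\max})^{1/F_{\max}}$ (whose exponential role is already encoded in Lemma \ref{geom}) is precisely what matches the additive injection to the multiplicative gain, recovering the classical $e/(e-1)$ regime as $F_{\max}\to 0$.
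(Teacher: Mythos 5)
Your proposal is correct and follows essentially the same route as the paper: you bound the three terms of $D$ exactly as in Lemmas \ref{inner}, \ref{betas}, and \ref{conj}, using the complementary-pair property for the conjugate term, the monotonicity of the dual iterates plus concavity telescoping for the support-function sum, and the telescoped $\beta$-update split into a multiplicative and an additive piece for the budget term. The one minor imprecision is in that last step, where the multiplicative piece is absorbed into the reward increment $\triangle P_j$ via the fact that $\beta_{t-1,j} \le \partial f_j(\sum_{s<t}A_{sj}x_{sj})$ whenever $x_{tj}>0$ (a consequence of the algorithm's saddle-point choice, as used in Lemma \ref{betas}), rather than via the almost-feasibility bound of Lemma \ref{props} that you cite.
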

To prove the Theorem, we will give three lemmas. Each of those lemmas is to bound one term on the RHS of \eqref{terms}.
\begin{lemma}\label{inner}
For any time slot $t$, DO chooses an allocation that satisfies the following:
\begin{small}
\begin{equation}
<\mathbf{\alpha}_t, \mathbf{A}_t\mathbf{x}_t> \leq \triangle P \label{innerTerm}
\end{equation}
\end{small}
where $\triangle P= \sum_j \triangle P_j=  \sum_{j} f_j(\sum_{s=1}^t A_{tj} x_{tj}) -f_j(\sum_{s=1}^{t-1} A_{tj} x_{tj})$ is the instantaneous utility obtained by DO at time $t$. 
\end{lemma}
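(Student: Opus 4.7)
The plan is to obtain the bound as a sum of per-job inequalities, each of which follows directly from concavity of the utility function. The key fact is Lemma \ref{properties}, which says that at every time slot $t$ the DO algorithm maintains the complementary-pair relation
\[
\alpha_{tj}\in \partial f_j\Bigl(\textstyle\sum_{s=1}^{t}A_{sj}x_{sj}\Bigr),
\]
so each $\alpha_{tj}$ is a (super)gradient of $f_j$ evaluated at the \emph{cumulative} allocation up through the current slot.

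First I would expand the left-hand side coordinate-wise: $\langle \mathbf{\alpha}_t,\mathbf{A}_t\mathbf{x}_t\rangle=\sum_j \alpha_{tj}A_{tj}x_{tj}$, so it suffices to prove the per-job inequality
\[
\alpha_{tj}\,A_{tj}x_{tj}\;\le\; f_j\Bigl(\textstyle\sum_{s=1}^{t}A_{sj}x_{sj}\Bigr)-f_j\Bigl(\textstyle\sum_{s=1}^{t-1}A_{sj}x_{sj}\Bigr),
\]
and then sum over $j\in J$ to recover $\triangle P$.

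Next I would establish the per-job inequality by applying the supporting-hyperplane inequality for concave functions at the point $\sum_{s=1}^{t}A_{sj}x_{sj}$. Writing $u=\sum_{s=1}^{t}A_{sj}x_{sj}$ and $v=\sum_{s=1}^{t-1}A_{sj}x_{sj}$, so that $u-v=A_{tj}x_{tj}\ge 0$, concavity of $f_j$ together with $\alpha_{tj}\in\partial f_j(u)$ gives
\[
f_j(v)\;\le\; f_j(u)+\alpha_{tj}(v-u),
\]
which rearranges exactly to $\alpha_{tj}A_{tj}x_{tj}\le f_j(u)-f_j(v)$. Summing over $j$ yields \eqref{innerTerm}.

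There is essentially no obstacle here; the lemma is the standard ``gradient dominates the linearization'' fact for concave $f_j$, combined with the bookkeeping that the DO algorithm indeed maintains $\alpha_{tj}$ as a gradient at the post-update cumulative allocation (Lemma \ref{properties}). The only care needed is to evaluate the gradient at the \emph{current} cumulative allocation $u$ rather than the previous one $v$, since that is the direction in which the concave supporting inequality gives the desired sign; using the gradient at $v$ would give an upper bound going the wrong way, so the choice of base point matters and should be emphasized.
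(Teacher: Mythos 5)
Your proof is correct and follows essentially the same route as the paper: invoke Lemma \ref{properties} to identify $\alpha_{tj}$ as a supergradient of $f_j$ at the \emph{current} cumulative allocation, then apply the concave supporting-hyperplane inequality and sum over jobs. The paper writes the argument in vector form rather than coordinate-wise, but the substance is identical, and your remark about the choice of base point is exactly the detail that makes the inequality go the right way.
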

\begin{proof}\renewcommand{\qedsymbol}{}
Let $f(\mathbf{y})=\sum_j f_j(y_j)$. By Lemma \ref{properties}, we know that $\mathbf{\alpha}_t \in \nabla f(\sum_{s=1}^t A_s \mathbf{x}_s)$. Substituting in the LHS of \eqref{innerTerm}, and using the concavity of utility function, we get the following
\begin{small}
\begin{equation*}
<\nabla f(\sum_{s=1}^t \mathbf{A}_s \mathbf{x}_s),\mathbf{A}_t x_t> \leq f(\sum_{s=1}^t \mathbf{A}_s \mathbf{x}_s)-f(\sum_{s=1}^{t-1} \mathbf{A}_s \mathbf{x}_s) = \triangle P
\end{equation*}
\end{small}
\end{proof}
\begin{lemma} \label{betas}
The sequence of vectors $[\mathbf{\beta}_1, \mathbf{\beta}_2,\ldots, \mathbf{\beta}_t]$ produced by DO has the following property:
\begin{small}
\begin{equation}
(\beta_{t}-\beta_{t-1})^TY \leq \triangle P \bigg(1+\frac{1}{C-1}\bigg)
\end{equation}
\end{small}
\end{lemma}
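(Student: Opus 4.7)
The plan is to establish the bound job-by-job and then sum over $j \in J$. For a fixed job $j$, introduce the shorthand $g_\tau := \partial f_j\bigl(\sum_{s=1}^{\tau} A_{sj} x_{sj}\bigr)$. Starting from the update in Line 5 of Algorithm 1, I would multiply through by $Y_j$, subtract $\beta_{t-1,j} Y_j$, and rearrange to expose three terms:
\begin{small}
\begin{equation*}
(\beta_{tj} - \beta_{t-1,j}) Y_j = \left(\tfrac{g_t}{g_{t-1}} - 1\right) Y_j \beta_{t-1,j} + \tfrac{g_t}{g_{t-1}} A_{tj} x_{tj} \beta_{t-1,j} + \tfrac{g_t A_{tj} x_{tj}}{C-1}.
\end{equation*}
\end{small}

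Next, I would bound each of the three terms on the right-hand side separately. The first is non-positive: strict concavity of $f_j$ gives $g_t \le g_{t-1}$, so it can simply be dropped. The third is at most $\Delta P_j/(C-1)$, via the concave-gradient inequality $g_t (y_t - y_{t-1}) \le f_j(y_t) - f_j(y_{t-1}) = \Delta P_j$, where $y_\tau := \sum_{s=1}^{\tau} A_{sj} x_{sj}$. The middle term is the critical one: when $A_{tj} x_{tj} > 0$, I would invoke the saddle-point characterization in Lemma \ref{properties}(2), combined with $\alpha_{tj} = g_t$ from Lemma \ref{properties}(1), to conclude that $\beta_{t-1,j} \le g_t$; together with $g_t / g_{t-1} \le 1$, this bounds the middle term by $A_{tj} x_{tj}\, g_t \le \Delta P_j$. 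When $A_{tj} x_{tj} = 0$ the middle term vanishes trivially. Collecting the three estimates gives $(\beta_{tj} - \beta_{t-1,j}) Y_j \le \Delta P_j \bigl(1 + \tfrac{1}{C-1}\bigr)$ for every job, and summing over $j$ delivers the stated inequality.

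The hardest step will be rigorously justifying $\beta_{t-1,j} \le g_t$ whenever $x_{tj} > 0$. Because the rate region $\mathbf{R}[t]$ couples allocations of different jobs that share a common user, the standard KKT-style reasoning (``a positive coordinate at a maximizer must carry a non-negative dual coefficient'') cannot be applied coordinatewise in isolation. I would justify it via a feasible perturbation that shrinks $x_{tj}$ alone while the remaining coordinates stay in $\mathbf{R}[t]$; closedness and the (implicit) downward-closedness of realistic wireless rate regions make such a perturbation available, but this is the one place where more than a mechanical expansion is needed. Everything else in the proof is routine use of the update rule and two elementary properties of concave functions.
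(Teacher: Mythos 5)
Your proposal is correct and follows essentially the same route as the paper's proof: expand the $\beta$-update into increments, drop the non-positive term via $g_t \le g_{t-1}$, bound the remaining two terms by $\Delta P_j$ and $\Delta P_j/(C-1)$ using the concave-gradient inequality together with $\beta_{t-1,j}\le \alpha$ whenever $x_{tj}>0$, and sum over jobs. The only place you go beyond the paper is in flagging that the coordinatewise claim ``$x_{tj}>0$ at the maximizer forces $\beta_{t-1,j}$ to be at most the gradient'' requires a free-disposal (downward-closedness) property of $\mathbf{R}[t]$ --- the paper invokes this step without comment, so your extra care is a refinement rather than a deviation.
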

\begin{proof}
For any active job $j$, we can bound each element in the LHS inner product as follows:
\begin{small}
\begin{align}
 \begin{split}
 \overset{\mathrm{(a)}}{\leq} &   \beta_{t-1j} Y_j \bigg(\frac{\partial f (\sum_{s=1}^{t} A_{sj} x_{sj})}{\partial f (\sum_{s=1}^{t-1} A_{sj} \mathbf{x}_s)}\bigg(1 + \frac{A_{tj}x_{tj}}{Y_j}\bigg)-1\bigg) \\
 +& \frac{\partial f (\sum_{s=1}^{t} A_{sj}  x_{sj} ) A_{tj} x_{tj}}{(C-1)}
\end{split}&\\
&\overset{\mathrm{(b)}}{\leq}  \partial f (\sum_{s=1}^{t} A_{sj}  x_{sj} )A_{tj}  x_{tj}  \bigg( \frac{\beta_{t-1j}}{\partial f (\sum_{s=1}^{t-1} A_{sj}  x_{sj} )} + \frac{1}{C-1}\bigg)
\\
&\overset{\mathrm{(c)}}{\leq} \triangle P_j (1 +\frac{1}{C-1})
\end{align}
\end{small}
Here (a) is due to the update equation of $\beta$. (b) is obtained by noticing that $\partial f (\sum_{s=1}^{t} A_s x_s) \leq \partial f (\sum_{s=1}^{t-1} A_s x_s)$ by concavity. (c) is because $\beta_{t-1j} \leq \partial f (\sum_{s=1}^{t-1} A_s x_s)$ if $x_{tj} >0$ (since this implies that $\alpha_{t-1j} > \beta_{t-1j}$).
\end{proof}

The next Lemma bounds the last term in \eqref{terms} by bounding the concave conjugate in terms of the original function.
\begin{lemma}\label{conj}
The concave conjugate $f^*(\alpha)$ can be bounded using the term, $\mu_f$ given by 
\begin{small}
\begin{equation}
\mu_f=\sup \{c| f^*(\alpha) \geq c f(u), \alpha \in \partial f(u), u \in K\}
\end{equation}
\end{small} 
for a proper cone $K$, and  $-1 \leq \mu_f \leq 0$.
\end{lemma}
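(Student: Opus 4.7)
The plan is to work directly from the Fenchel--Young identity at a point of the subdifferential and then use concavity and the sign structure of the cone $K$ (which we take to contain only nonnegative components, consistent with $\alpha_j,\beta_j\ge 0$ and $x_{tj}\ge 0$ in the algorithm). The key observation is that whenever $\alpha\in\partial f(u)$, the infimum defining $f^{*}(\alpha)$ in \eqref{conj1} is attained at $x=u$, so
\begin{equation*}
f^{*}(\alpha)\;=\;\langle \alpha,u\rangle - f(u).
\end{equation*}
Everything else reduces to comparing $\langle\alpha,u\rangle$ with $f(u)$ from two sides.

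First I would show $\mu_f\le 0$, i.e.\ that $f^{*}(\alpha)\le 0$ whenever $\alpha\in\partial f(u)$, $u\in K$. Using the concave supergradient inequality $f(v)\le f(u)+\langle\alpha,v-u\rangle$ with $v=0$, together with $f(0)=0$, yields $\langle\alpha,u\rangle\le f(u)$, hence $f^{*}(\alpha)=\langle\alpha,u\rangle-f(u)\le 0$. Since $f$ is non-decreasing with $f(0)=0$ and $u\in K$ has nonnegative entries, $f(u)\ge 0$, so $f^{*}(\alpha)\le 0=0\cdot f(u)$; taking the supremum over admissible $c$ gives $\mu_f\le 0$.

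Next I would establish $\mu_f\ge -1$, i.e.\ $f^{*}(\alpha)\ge -f(u)$ on the same pairs. This is even more immediate: from the identity $f^{*}(\alpha)=\langle\alpha,u\rangle-f(u)$ and $\langle\alpha,u\rangle\ge 0$ (nonnegativity of $\alpha$ and $u$ in $K$), one gets $f^{*}(\alpha)\ge -f(u)$, so $c=-1$ is feasible in the defining supremum and $\mu_f\ge -1$.

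I do not expect any real obstacle here; the only subtlety is bookkeeping around the case $f(u)=0$, which forces $\alpha\in\partial f(0)$ and $f^{*}(\alpha)=0$, so the inequality $f^{*}(\alpha)\ge c f(u)$ holds trivially for every $c$ and does not affect the supremum. The substantive content of the lemma is therefore the two one-sided bounds above, both of which follow directly from Fenchel--Young together with concavity and the sign assumption built into the cone $K$.
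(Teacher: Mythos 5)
Your proof is correct and follows exactly the route the paper intends: the paper itself only remarks that the lemma ``is straightforward from Lemma \ref{properties}'' and defers the details to Lemma 1 of \cite{eghbali2016designing}, and the complementary-pair property of Lemma \ref{properties} is precisely the Fenchel--Young identity $f^*(\alpha)=\langle\alpha,u\rangle-f(u)$ from which you argue. Your two one-sided bounds (the supergradient inequality at $v=0$ with $f(0)=0$ for $\mu_f\le 0$, and $\langle\alpha,u\rangle\ge 0$ for $\mu_f\ge -1$, the latter being the only part actually used in Theorem \ref{key}) supply the details the paper omits; the only cosmetic point is that $\mu_f\le 0$ needs the additional observation that $f(u)>0$ for some $u\in K$ (guaranteed by strict concavity and monotonicity), since exhibiting $f^*(\alpha)\le 0\cdot f(u)$ alone does not rule out feasible $c>0$.
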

The proof is straightforward from Lemma \ref{properties}. A complete proof of this property is given in Lemma 1 in \cite{eghbali2016designing}.

\begin{proof}{(Theorem \ref{key}):}
The first two terms in \eqref{terms} can be bounded as follows
\begin{small}
\begin{eqnarray}
D' =&\sum_{t=1}^T \sigma_t(\mathbf{A}_t^T \mathbf{\alpha}_T -\mathbf{\beta}_T) + \mathbf{\beta}^T_{T}Y \\
=& \sum_{t=1}^T <\mathbf{\alpha}_T -\mathbf{\beta}_T, \sum_{s=1}^t \mathbf{A}_s \mathbf{x}_s> + \mathbf{\beta}^T_{T}Y\\
 \overset{\mathrm{(a)}}{\leq}& \sum_{t=1}^T < \mathbf{\alpha}_T, \sum_{s=1}^t \mathbf{A}_s \mathbf{x}_s> + \mathbf{\beta}^T_{T}Y \\
  \overset{\mathrm{(b)}}{\leq}& \sum_{t=1}^T < \mathbf{\alpha}_t, \sum_{s=1}^t \mathbf{A}_s \mathbf{x}_s> + \mathbf{\beta}^T_{T}Y\\
 \overset{\mathrm{(c)}}{=}& \sum_{t=1}^T < \mathbf{\alpha}_t, \sum_{s=1}^t \mathbf{A}_s \mathbf{x}_s> + \sum_{t=1}^T(\mathbf{\beta}_{t}-\mathbf{\beta}_{t-1})^TY \\
  \overset{\mathrm{(d)}}{\leq} &  \sum_{t=1}^T \triangle P (2+ \frac{1}{C-1}) =P(2+\frac{1}{C-1})
\end{eqnarray}
\end{small}

\noindent where (a) is because $\beta_T \geq 0$, so dropping the term $<-\mathbf{\beta_T}, \sum_{s=1}^t A_s x_s>$ can only increase the objective. (b) is because $\alpha_t \geq \alpha_T$, by Lemma \ref{properties} and the concavity of the function, thus decreasing gradients. (c) is true due to telescoping and the fact that $\beta_0=0$. (d) holds by substituting the bounds from Lemmas \ref{inner} and \ref{betas}.\\
Finally by \eqref{terms}, we have $D=D'-\sum_{j=1}^J f_j^*(\alpha_{Tj})$. We can bound that extra $-\sum_{j=1}^J f_j^*(\alpha_{Tj})$ term on the RHS by $P$ utilizing Lemma \ref{conj}. Adding that bound to the bound on $D'$ concludes the proof.
\end{proof}

\begin{corollary}\label{DOcomp}
The online solution found by DO is $(3+\frac{1}{C-1})$-competitive.
\end{corollary}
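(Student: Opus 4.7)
The plan is to combine Theorem \ref{key} with the weak duality chain supplied by Theorem \ref{dualityThm}; the corollary is essentially a one-line consequence of results already in hand.

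First, I would verify that the dual pair $(\mathbf{\alpha}_T, \mathbf{\beta}_T)$ produced by DO is feasible for the dual program. This is already packaged for me in parts \eqref{dual1}--\eqref{dual2} of Lemma \ref{props}: both $\mathbf{\alpha}_T$ and $\mathbf{\beta}_T$ are coordinate-wise nonnegative, which is the only dual constraint. Weak duality in Theorem \ref{dualityThm} therefore gives $D \geq D^{*} = P^{*}$ at once, where $D$ is the dual objective evaluated at the DO iterates.

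Second, I would invoke Theorem \ref{key}, which has just delivered the complementary bound $D \leq \bigl(3 + \frac{1}{C-1}\bigr) P$ on the same dual solution. Chaining the two inequalities yields
\begin{equation*}
P^{*} \;\leq\; D \;\leq\; \Bigl(3 + \tfrac{1}{C-1}\Bigr)\, P,
\end{equation*}
which is precisely the inequality $P^{*}/P \leq 3 + \frac{1}{C-1}$ required by the Competitive Ratio definition.

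The only step where I would slow down -- and the single potential obstacle -- is that Lemma \ref{props} only gives \emph{almost} feasibility of the primal: the budget constraint \eqref{budget} may be overshot by the factor $1+F_{\max}$ from \eqref{primal2}. I would repair this by rescaling each $x_{tj}$ by $1/(1+F_{\max})$, as already suggested in the discussion following Lemma \ref{props}. Concavity of $f_j$ together with $f_j(0)=0$ implies $f_j(x/(1+F_{\max})) \geq f_j(x)/(1+F_{\max})$, so the rescaled (now strictly feasible) reward is at least $P/(1+F_{\max})$. The competitive ratio therefore inflates by a benign factor of $1+F_{\max}$; since $F_{\max}$ is assumed small and $C = (1+F_{\max})^{1/F_{\max}} \to e$ as $F_{\max}\to 0$, the bound tends to $3 + 1/(e-1) \approx 3.58$, matching the ``approximately 3.6'' figure advertised in the abstract.
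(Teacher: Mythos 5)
Your proposal is correct and follows exactly the route the paper intends: dual feasibility from Lemma \ref{props}, weak duality from Theorem \ref{dualityThm} to get $P^{*} \leq D$, and Theorem \ref{key} to get $D \leq (3+\frac{1}{C-1})P$. Your repair of the almost-feasible primal (rescaling by $1/(1+F_{\max})$ and using $f_j(\lambda x)\geq \lambda f_j(x)$) matches the paper's own remark following the corollary, up to the cosmetic difference that the paper rescales by $(1-F_{\max})$ instead.
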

We note two things about our results
\begin{enumerate}
\item To guarantee primal feasibility, the BS can multiply the resource allocation solution by $(1-F_{\max})$ at each time slot. This adds an extra factor to the Competitive Ratio making the algorithm $(3+\frac{1}{C-1})(1-F_{\max})$-competitive.
\item Practically, we expect $F_{\max}$ to be small as the job service times have a slower time scale than the scheduling job completion time scale. Thus we expect $F_{\max} \to 0$ making the algorithm approximately $3+\frac{1}{e-1}$-competitive.
\end{enumerate}
\subsection{Lightweight Algorithm}
The complexity of the DO Algorithm can be further reduced by splitting the saddle point problem in line 4 into two separate steps as follows:
\begin{small}
\begin{equation*}
\underset{\mathbf{x} \in \mathbf{R}[t]}{\max}  \, <\alpha_{t-1} -\beta_{t-1},A_t \mathbf{x}>, \,\,\,\,\,
\alpha_{tj} \in  \partial(f_j(\sum_{s=1}^t A_{sj} x_{sj}))
\end{equation*}
\end{small}
This approximation was proposed in \cite{eghbali2016designing} in the context of online bipartite matching. This formulation approximates the saddle point problem with a Linear Programming problem, reducing complexity. However, the price of this reduction in complexity is an increase in the constant-competitive ratio bound that depends on the specific utility function gradients (\cite{eghbali2016designing} analyzes this penalty in the bipartite matching problem). We will show using numerical simulations that this approximation retains the good performance of the DO algorithm.

\section{Stochastic Setting with timely throughput constraints}       
Although the job/reward formulation in \eqref{formulation} has been used extensively in modeling scheduling with hard deadlines, for example \cite{zheng2016online}\cite{lucier2013efficient}\cite{devanur2018primal} , a formulation that aims to maximize total rewards of jobs is susceptible to unfairness. For example, the BS can maximize the sum of rewards by consistently allocating resources to a nearby user experiencing better channels all the time. This phenomenon was reported in previous works \cite{liu2001opportunistic} and is further validated by simulations. Furthermore, the results in the previous section hold for adversarial models, designed for ``worst case" inputs. In practice however, both the job arrivals processes and the rate regions are stochastic. We propose a new model to deal with those two issues that have the following extra assumptions:

\begin{assumption}\label{assumption1} 
\begin{enumerate}[wide, labelwidth=!, labelindent=0pt]
\, \, \, \item We assume a frame structure: At the beginning of a frame of size $D$, some jobs arrive to the BS to be transmitted to users. By the end of the frame after $D$ slots, all jobs expire, and the system is empty.  Note that jobs can still have different deadlines as long as they are all upper bounded by $D$. The frame structure has been extensively used in modeling deadline-constrained traffic \cite{hou2014scheduling}  \cite{jaramillo2011optimal}\cite{deng2017timely}. This assumption has been shown to adequately approximate practical scenarios, while enabling the design of efficient scheduling algorithms with deterministic bounds on delay.
\item We assume that there are $l$-job classes with specified deadlines, reward functions, and sizes. Each of these $l$-classes arrive at the beginning of the frame according to an i.i.d arrival process $\mathcal{A}_k$. We assume that the number of the new jobs arriving at the beginning of a frame can be deterministically bounded, i.e., $(m(t)) \leq M$, where $m(t)$ is a random variable representing the number of active jobs at time $t$. 
\item We assume that the instantaneous rate region $\mathbf{R}[t]$ is sampled every time slot from a set of finite convex regions in an i.i.d manner unknown to the BS. The realization of rate regions over a frame is denoted as $\mathcal{R}_k$.
\end{enumerate}
\end{assumption}

The new formulation is presented in \eqref{formulation2}. Our goal now is to maximize the long-term average expected rewards over frames $k=1, \ldots, K$. We denote the jobs that arrive at frame $k$ as $J^k$. In \eqref{timely2}, we introduce a new constraint to guarantee fairness by ensuring that every user gets an expected \textbf{timely-throughput} higher than $\delta_n$. Timely-throughput is the amount of traffic delivered within the deadline over a period of time. It has been used extensively to analyze networks with real-time traffic \cite{hou2014scheduling}\cite{lashgari2013timely}. The function $U(\,)$ simply maps the job $j$ to its intended user $n$.

\vspace{-0.3cm}
\begin{small}
\begin{subequations}\label{formulation2}
        \begin{align}
    & \underset{\mathbf{x}_1, \ldots, \mathbf{x}_t}{\max}
    & & \underset{K \to \infty}{\liminf} \frac{1}{K} \sum_{k=1}^K \mathbb{E}\bigg\{ \sum_{j \in  J^k} f_j(\sum_{t=kD}^{(k+1)D-1} A_{tj} x_{tj})\bigg\} \label{objective2}&&& \\
    & \text{subject to}
    & &  \underset{K \to \infty}{\liminf} \frac{1}{K} \sum_{k=1}^K \mathbb{E}\bigg\{ \sum_{j \in  J^k \cap U(j)=n} \sum_{t=kD}^{(k+1)D-1} A_{tj} x_{tj}\bigg\} \geq \delta_n \label{timely2}&&&\\
    &&& \sum_{t=1}^T x_{tj} \leq Y_j, \;  \forall j \label{budget2}&&& \\ 
    &&& \mathbf{x_t} \in \mathbf{R}[t], \;  \;  \forall t= 1,2, \ldots, T.\label{capacity2}&&&
\end{align}
    \end{subequations}
    \end{small}   
We refer to a random realization of job arrivals and rate regions over a frame as $q$. The optimization problem \eqref{formulation2} can be solved by a stationary scheduler that maps $q=\{\mathcal{A}_k, \mathcal{R}_k\}$ into the set of feasible actions over the frame:\\ $\chi=\{ x | \sum_{t=kD}^{(k+1)D-1} x_{tj} \leq Y_j, \;  \forall j \in J^k, \mathbf{x_t} \in \mathbf{R}[t],   \forall t= kD, \ldots, (k+1)D-1\}$ with probabilities  $p_{q \chi}$. Thus, the optimal solution can be derived by finding the probabilities $p_{q \chi}$ that solve \eqref{stat}. This is practically infeasible as the probabilities $q$ are typically unknown to the BS. Even if the probabilities were known, the BS needs to non-causally know the rate regions for the entire frame. This motivates us to extend our DO algorithm for the stochastic setting to solve \eqref{stat} and derive performance guarantees. 

\vspace{-0.3cm}
\begin{small}
 \begin{subequations}\label{stat}
   \begin{align}
    & \underset{p_{q\chi}}{\max}
    & & \sum_{q} \nu_q  \int_{\chi \in X_q} p_{q \chi} \sum_{j} f_j(\sum_{t=KD}^{(K+1)D-1} A_{tj} x_{tj}) d\chi 
 \\
    & \text{subject to}
    & & \sum_{q} \nu_q  \int_{\chi \in X_q} p_{q \chi} \sum_{j|U(j)=n} \sum_{t=KD}^{(K+1)D-1} A_{tj} x_{tj} d\chi \geq \delta_n \label{fairness}\\  
    &&& \int_{\chi \in X_q} p_{q \chi} d\chi=1 \, \, \forall q \\
    &&& \int_{\chi \in X_q} p_{q \chi} \geq 0 \, \, \forall q 
\end{align}
    \end{subequations}
    \end{small}   
    \vspace{-0.5cm}
\subsection{Virtual Queue Structure}
To deal with the new timely throughput constraints \eqref{timely2} for each user $n$, we define a virtual queue that records constraint violations. For every frame, the amount of unserved work under the $\delta_n$ requirement, $\delta_n - \sum_{t=1}^T A_{tj} x_{tj}$ is added to the queue, i.e., the queue is updated as follows:

\begin{small}
\begin{equation}
Q_{n}[k+1]=(Q_n[k]+\delta_n - \sum_{j \in J^{k} \cap U(j)=n} \sum_{t=kD}^{(k+1)D-1} A_{tj} x_{tj})^+ ,\label{evolution}
\end{equation}
\end{small}
where $(x)^+=\max(0,x)$. There are two time-scales at play here. First, the slower frame-level time scale. At the beginning of a frame, jobs arrive and by the end of the frame, those jobs expire. Second, the faster slot level time-scale, where the channels change and the BS allocates rates $\mathbf{x}$. Each frame consists of $D$ time slots where all jobs are guaranteed to expire by the end of the frame by Assumption \ref{assumption1}.\\
Virtual queues are used to analyze the time-average constraint violation for a given scheduling policy. It can be shown that stability of the virtual queue ensures that the constraint is satisfied in the long term. We state that well-known result as a Lemma without proof (The proof is simple and can be found in \cite{neely2010stochastic} \cite{tan2012online})

\begin{lemma}
For any user $n$, the virtual queue length upper bounds the constraint violation at all times as follows:
\begin{small}
\begin{equation}
\frac{Q_n[K]}{K}-\frac{Q_n[0]}{K} \geq \delta_n - \frac{1}{K}\sum_{k=1}^K \sum_{j \in J^{k} \cap U(j)=n}\sum_{t=kD}^{(k+1)D-1} A_{tj} x_{tj}
\end{equation}
\end{small}

Furthermore the mean rate stability defined as: 
\begin{small}
\begin{equation}
\lim_{K \to \infty}   \frac{\mathbb{E}(Q_n[K])}{K}=0
\end{equation}
\end{small}
implies that the constraint \eqref{timely2} is satisfied in the long-term.
\end{lemma}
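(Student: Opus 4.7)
The plan is to derive both claims from the queue recursion \eqref{evolution} by a short telescoping argument, following the standard virtual-queue technique from Neely's Lyapunov framework.

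First I would drop the projection onto the nonnegative orthant in \eqref{evolution}: since $(z)^+ \geq z$ for every real $z$, the recursion implies $Q_n[k+1] \geq Q_n[k] + \delta_n - \sum_{j \in J^{k} \cap U(j)=n} \sum_{t=kD}^{(k+1)D-1} A_{tj} x_{tj}$. Summing this inequality telescopically over the first $K$ frames causes the intermediate $Q_n[k]$ terms to cancel and yields $Q_n[K] - Q_n[0] \geq K\delta_n - \sum_{k=1}^K \sum_{j \in J^{k} \cap U(j)=n} \sum_{t=kD}^{(k+1)D-1} A_{tj} x_{tj}$. Dividing through by $K$ produces the first displayed inequality of the lemma.

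For the second claim, I would take expectations of the pathwise bound just obtained, giving $\frac{\mathbb{E}(Q_n[K]) - Q_n[0]}{K} \geq \delta_n - \frac{1}{K}\sum_{k=1}^K \mathbb{E}\bigl(\sum_{j \in J^{k} \cap U(j)=n} \sum_{t=kD}^{(k+1)D-1} A_{tj} x_{tj}\bigr)$. Passing to $\liminf_{K\to\infty}$ on both sides, the mean-rate-stability hypothesis forces the left-hand side to vanish (together with $Q_n[0]/K \to 0$), while the right-hand side rearranges to precisely the constraint \eqref{timely2}.

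The argument is genuinely routine; the only mild subtlety is justifying the term-by-term expectation of the telescoped inequality, but this is immediate because Assumption \ref{assumption1} bounds the number of jobs arriving per frame by $M$ and fixes the frame length at $D$, so each per-frame service random variable is bounded. This is presumably why the authors defer to \cite{neely2010stochastic,tan2012online} rather than reproducing the calculation here.
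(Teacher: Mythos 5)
Your proof is correct and is precisely the standard telescoping virtual-queue argument that the paper itself omits, deferring to \cite{neely2010stochastic} and \cite{tan2012online}. Nothing to add: dropping the $(\cdot)^+$ projection, telescoping, dividing by $K$, and then taking expectations and limits under mean-rate stability is exactly the intended derivation, and your remark about boundedness of the per-frame service justifying the expectation step is accurate.
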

\subsection{D Look-ahead Algorithm}
Before explaining our algorithm, we present and analyze a non-causal frame-based algorithm that we refer to as the D look-ahead algorithm. The benefits of this hypothetical algorithm are two-fold: First, it guides our design of the practical LFDO algorithm in the next section, and second, it will be crucial in analyzing the performance of LFDO.\\
The D look-ahead algorithm observes the jobs $J^k$ at the beginning of the frame and non-causally observes all rate regions over the frame $\mathbf{R}[k], \mathbf{R}[k+1], \ldots, \mathbf{R}[k+D-1]$, and allocates rates $\mathbf{x'}$ of jobs over the frame $k$ by solving the following optimization problem:

 \vspace{-0.2cm}
\begin{small} 
 \begin{subequations}\label{Dlookahead}
        \begin{align}
    & \underset{\mathbf{x}_{kD}, .., \mathbf{x}_{(k+1)D-1}}{\max}
     & \begin{split} &V\sum_{j \in  J_k} f_j(\sum_{t=kD}^{(k+1)D-1}A_{tj} x_{tj}) \\
    &+ \sum_{n=1}^N Q_n[k]\bigg(\sum_{j|U(j)=n} \sum_{t=kD}^{(k+1)D-1} A_{tj} x_{tj}\bigg) \label{DLAreward} \end{split} \\
    & \text{subject to}
    & &\sum_{t=1}^T x_{tj} \leq Y_j, \;  \forall j  \in J_k\\ 
    &&& \mathbf{x_t} \in \mathbf{R}[t], \;  \;  \forall t \in [kD,(k+1)D-1] 
\end{align}
 \end{subequations}
 \end{small}
\noindent where $V$ is a free parameter that will be used to manage the trade-off between the timely-throughput short-term constraint violation and total reward achieved by the algorithm. The D look-ahead algorithm is essentially a version of the well-known drift-plus-penalty algorithm introduced in \cite{neely2010stochastic}  that has been used extensively in stochastic constrained optimization problems, where a queue structure can be used to deal with long-term constraints.\\
 To simplify the notation, we will refer to the frame $k$ D look-ahead reward and timely throughput, respectively as follows:
 
 \vspace{-0.2cm}
 \begin{small}
 \begin{align} 
P'[k] &=  \sum_{j \in  J_k} f_j(\sum_{t=kD}^{(k+1)D-1}A_{tj} x'_{tj}) \label{not1}\\ 
b_n'[k] &=  \bigg(\sum_{j \in J_k|U(j)=n} \sum_{t=kD}^{(k+1)D-1} A_{tj} x'_{tj}\bigg) \label{not2}
\end{align}
\end{small}
 We define the quadratic Lyapunov function $L(\mathbf{Q}[t])= \frac{1}{2} \sum_{n=1}^N Q_n^2[t]$. We also define the one step Lyapunov drift and bound it as follows: 
 \begin{small}
 \begin{multline} 
\triangle \mathbf{\Theta}(\mathbf{Q}) =\mathbb{E}(L(\mathbf{Q}[k+1])-L(\mathbf{Q}[k])|\mathbf{Q}[k]= \mathbf{Q}) \\ \leq B + \sum_{n=1}^N Q_n[k](\delta_n-b_n[k]) \label{drift}
\end{multline}
\end{small}
\noindent where $B$ is a bound on the term $\mathbb{E}\big((\delta_n-b_n[k])^2\big)$, which is guaranteed to exist due to the boundedness of the number of jobs and the job sizes. It can be seen that the D Look-ahead algorithm in \eqref{Dlookahead} attempts to maximize the reward while minimizing the drift (and subsequently the queue lengths), using the parameter V to manage the trade-off. We are now ready to state the theorem that bounds the performance of the D look-ahead theorem.

\begin{theorem} \label{Dlk}
Suppose there exists a solution that can achieve a timely throughput strictly greater than $\delta_n+\epsilon$, for some $\epsilon>0$, for all users. Under the D look-ahead solution, the queues $Q_n, \forall n$  are mean-rate stable, and the following holds:

\begin{small}
\begin{align}
\liminf_{K \to \infty} \frac{1}{K}\sum_{k=1}^K \mathbb{E}(P'[k]) &\geq P^* -\frac{B}{V} \label{Sreward}\\
\limsup_{K \to \infty}\frac{1}{K}\sum_{k=1}^K  \sum_{n=1}^N \mathbb{E}(Q_n[k]) &\leq \frac{B+VM f_{\max}(Y_{\max})}{\epsilon} \label{Squeue}
\end{align}
\end{small}
\end{theorem}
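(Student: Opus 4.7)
The plan is to use the standard drift-plus-penalty technique. The starting point is the one-step drift bound \eqref{drift}. Subtracting $V \cdot \mathbb{E}(P'[k] \mid \mathbf{Q}[k])$ from both sides gives the drift-plus-penalty inequality
\begin{equation*}
\triangle \mathbf{\Theta}(\mathbf{Q}) - V\,\mathbb{E}(P'[k]\mid \mathbf{Q}) \leq B + \sum_{n=1}^N Q_n[k](\delta_n - \mathbb{E}(b_n'[k]\mid\mathbf{Q})) - V\,\mathbb{E}(P'[k]\mid \mathbf{Q}).
\end{equation*}
The critical observation is that the D look-ahead optimization problem \eqref{Dlookahead} is constructed precisely to \emph{maximize} $V P'[k] + \sum_n Q_n[k] b_n'[k]$ over all frame-feasible allocations, using the non-causally observed rate regions. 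Consequently, at each frame, it achieves a value of the right-hand side that is at least as small as what \emph{any} other frame-feasible policy (including any stationary randomized policy) would produce on that same sample path. This ``opportunistic'' property is what makes the upper bound tight.

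Next I would invoke a reference stationary randomized policy guaranteed by the premise of the theorem. By the stationary formulation \eqref{stat} and the $\epsilon$-slack assumption, there exists a policy $\pi^*$ that is oblivious to the current queue state, achieves expected reward $P^*$, and satisfies $\mathbb{E}(b_n^{\pi^*}[k]) = \delta_n$ for all $n$. Plugging this policy into the drift-plus-penalty bound (using the opportunistic property of D look-ahead) yields
\begin{equation*}
\triangle \mathbf{\Theta}(\mathbf{Q}) - V\,\mathbb{E}(P'[k]\mid\mathbf{Q}) \leq B - V P^*.
\end{equation*}
Taking total expectations, telescoping over $k=1,\ldots,K$, using $L(\mathbf{Q}[K+1])\geq 0$ and a finite initial condition, and dividing by $K$ and $V$ gives \eqref{Sreward} after taking $\liminf$.

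For the queue bound I would instead compare with a second reference policy $\pi^\epsilon$ that achieves timely throughput at least $\delta_n+\epsilon$ for every user (which exists by the same premise). Plugging $\pi^\epsilon$ into the same inequality yields
\begin{equation*}
\triangle \mathbf{\Theta}(\mathbf{Q}) \leq B - \epsilon \sum_{n=1}^N Q_n[k] + V\,\mathbb{E}(P'[k]\mid\mathbf{Q}),
\end{equation*}
where I have dropped the non-positive term $-V\,\mathbb{E}(P^{\pi^\epsilon}[k]\mid\mathbf{Q})$. Using the deterministic upper bound $P'[k] \leq M f_{\max}(Y_{\max})$ (coming from the bound $m(t)\leq M$ in Assumption \ref{assumption1} and the job size bound $Y_{\max}$), taking expectations and telescoping, the $L(\mathbf{Q}[K+1])\geq 0$ term lets us drop it from the left-hand side. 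Rearranging and taking $\limsup$ delivers \eqref{Squeue}. Mean-rate stability then follows immediately: since $\mathbb{E}(Q_n[k])$ is bounded on average by a constant, $\mathbb{E}(Q_n[K])/K \to 0$.

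The main obstacle I anticipate is the first step: carefully justifying that solving \eqref{Dlookahead} for a single frame with the observed $\mathbf{R}[\cdot]$ really does dominate any stationary randomized policy in the drift-plus-penalty sense, despite the two policies operating on different information. This requires conditioning on $\mathbf{Q}[k]$, noting that the stationary policy's action is independent of $\mathbf{Q}[k]$ and measurable w.r.t. the same frame-level randomness, and then applying the per-sample-path optimality of the D look-ahead maximizer inside the expectation. The rest of the argument is routine Lyapunov bookkeeping.
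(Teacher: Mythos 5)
Your proposal follows essentially the same drift-plus-penalty argument as the paper: both exploit the per-frame, per-sample-path optimality of the D look-ahead maximizer to compare against a queue-independent stationary randomized policy (one with exact feasibility for the reward bound, one with $\epsilon$-slack for the queue bound), then telescope, and your slightly different bookkeeping in the queue bound (bounding $P'[k]$ rather than $P(\epsilon)$ by $Mf_{\max}(Y_{\max})$) lands on the identical constant. The only quibble is your closing line on mean-rate stability: a bounded Ces\`{a}ro average of $\mathbb{E}(Q_n[k])$ does not by itself imply $\mathbb{E}(Q_n[K])/K \to 0$; the standard route is via the at-most-linear-in-$K$ growth of $\mathbb{E}(L(\mathbf{Q}[K]))$ from the telescoped drift inequality, giving $\mathbb{E}(Q_n[K]) = O(\sqrt{K})$ --- though the paper's own proof omits this step as well.
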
 

Before giving the proof, we point out that Theorem \ref{Dlk} shows that the D look-ahead algorithm can be made arbitrarily close to OPT by increasing $V$, at the cost of increasing the queue lengths, which implies higher short term violation of the timely throughput constraint. The main assumption of the theorem is a mild assumption that a strictly feasible solution exists, i.e., timely throughput constraints cannot be set arbitrarily and must be strictly feasible under some solution. This corresponds to the ``Slater conditions" that are essential to applying the Lyapunov arguments \cite{neely2010stochastic}.
\begin{proof}
Let the reward achieved and the amount of traffic served by the D look-ahead be denoted by $P'[k]$ and $b'[k]$ as in \eqref{not1} and \eqref{not2}, respectively. Similarly OPT achieves $P^*[k]$ and $b^*[k]$. By the maximization in \eqref{DLAreward}, we have:
\begin{small}
\begin{equation}
\sum_{n=1}^N Q_n[k](\delta_n-b_n'[k])-VP'[k] \leq \sum_{n=1}^N Q_n[k](\delta_n-b_n^*[k])-VP^*[k] \label{DplusP}
\end{equation} 
\end{small}
\noindent for any frame instance, thus, the inequality holds for the conditional expectation given the queue lengths equal to $\mathbf{Q}$. Noting the definition of the drift in \eqref{drift}, we can bound $ E(\triangle \mathbf{\Theta}'(\mathbf{Q}))-V\mathbb{E}(P'[k]|\mathbf{Q})$  by the following:
 
\begin{small}
\begin{IEEEeqnarray}{lCr}
\leq & E(\triangle \mathbf{\Theta}^*(\mathbf{Q}))-V\mathbb{E}(P^*[k]|\mathbf{Q}) \\
\overset{\mathrm{(a)}}{\leq} & B+\sum_{n=1}^N Q_n[k]\mathbb{E}(\delta_n-b^*_n[k]) -V\mathbb{E}(P^*[k]|\mathbf{Q}) \label{Dineq}\\
\overset{\mathrm{(b)}}\leq & B-V\mathbb{E}(P^*[k]|\mathbf{Q})
\end{IEEEeqnarray}
\end{small}
\noindent where (a) is by the bound in \eqref{drift}, and (b) is because the optimal stationary solution satisfies the constraint in expectation independent of $\mathbf{Q}$.\\
Taking the expectation over $\mathbf{Q}$ and taking the time average over all the frames, we can use telescoping sums to arrive at the key equation

\vspace{-0.3cm}
\begin{small}
\begin{equation}
L(Q[k])-L(Q[0]) -\frac{V}{K} \sum_{k=1}^K \mathbb{E}(P'[k]) \leq B-VP^*
\end{equation} 

\end{small}
Noting that $L(Q[k])$ is non-negative, initializing $L(Q[0])$ to $0$, and rearranging the sum gives \eqref{Sreward}.\\
To prove \eqref{Squeue}, we can follow the same steps by comparing the solution produced by the D Look-ahead algorithm to another solution that can strictly satisfy the constraint \eqref{timely2}, i.e., $\mathbb{E}(\delta_n-b_n[k]) < -\epsilon, \forall n$, for some $\epsilon >0$. This solution is guaranteed to exist by the assumption in the Theorem statement. We denote the reward of that solution as $P(\epsilon)$. Repeating the same steps up to \eqref{Dineq} we get the following inequality: 

\begin{small}
\begin{multline}
\mathbb{E}(\triangle \mathbf{\Theta'}(\mathbf{Q}[k]))-V\mathbb{E}(P'[k]) \\ \leq  B- \epsilon \sum_{n=1}^N \mathbb{E}(Q_n[k])-V\mathbb{E}(P(\epsilon)[k]|\mathbf{Q})
\end{multline}
\end{small}
Similar to last part, we can take the time average over frames and telescope to get
\begin{small}
\begin{equation*}
\frac{1}{K}\sum_{k=1}^K \sum_{n=1}^N \mathbb{E}(Q_n[k]) \leq  \frac{B+V\mathbb{E}(P(\epsilon))-\mathbb{E}(P'))+ \mathbb{E}(L(Q[0])}{\epsilon}
\end{equation*}
\end{small}
Bounding $P(\epsilon)$ by the $M f_{\max}(Y_{\max})$, the maximum achievable reward over the frame, and taking the limit gives \eqref{Squeue}.
\end{proof}
\subsection{Long-term Fair Deadline Oblivious (LFDO) Algorithm}
We are now ready to present our modified deadline oblivious algorithm that can satisfy long-term timely throughput constraints.
\begin{small}
\begin{algorithm}
\caption{Long-term Fair Deadline Oblivious (LFDO) Algorithm}
\SetKwInput{kwInit}{Initialize}
\SetKwInput{kwInitF}{Initialize Frame}
\kwInit{At $k=0$, set $Q_n[k]=0, \; \forall n$}\
\For{$k=1$ \KwTo $K$}
{
\kwInitF{Receive jobs at the beginning of the frame}
\For{$t=kD$ \KwTo $(k+1)D-1$}
{ 
Perform the DO algorithm with the modified job reward function, $g_j$:
\begin{small}
\begin{equation}
g_j(x)=Vf_j(x)+\sum_{n=1}^N \mathbbm{1}(U(j)=n)Q_n[k]A_{tj}x_{tj} \label{modreward}
\end{equation}
\end{small}
}
Update the queues according to \eqref{evolution}
}
\end{algorithm}
\end{small}
\vspace{-0.3cm}

As can be seen in Algorithm 2, LFDO is a modified version of the DO algorithm incorporating long term timely throughput guarantees. This is done by building on the virtual queue idea shown in the D look-ahead solution. There are two time scales at play here:
\begin{itemize}[leftmargin=2mm]
\item Frame time scale: The slower time scale where virtual queues are updated according to the LFDO solution over the frame duration.
\item Slot time scale: The faster time scale where the DO algorithm operates. Every frame length acts as the ``horizon" for the DO algorithm. At the beginning of the frame, DO re-initializes to serve the jobs that belong to that frame.
\end{itemize} 
The reward function in line 3 has been modified to add the user queue length information to the job reward function. This follows the drift-plus-reward maximization used to obtain the D look-ahead solution in \eqref{Dlookahead}. The difference is, unlike the D look-ahead solution, LFDO does not know the future rate regions. Thus, on time-slot scale, LFDO uses the primal-dual optimization used for DO with the modified reward.\\
We are now ready to combine our results of the DO algorithm performance and the D look-ahead solution performance to obtain a powerful performance result for the LFDO algorithm in the next theorem
\begin{theorem} \label{LFDOperf}
Under the LFDO Algorithm in the stochastic setting, all queues are mean rate-stable. Furthermore, the expected reward and the expected queue length can be bounded as follows:
\begin{small}
\begin{align}
\liminf_{K \to \infty} \frac{1}{K}\sum_{k=1}^K \mathbb{E}(P[k]) &\geq \frac{1}{\gamma}(P^*-\frac{B}{ V}) \label{LFDOreward}\\
\limsup_{K \to \infty}\frac{1}{K}\sum_{k=1}^K  \sum_{n=1}^N \mathbb{E}(Q_n[k]) &\leq \frac{\gamma( B+ VMf_{\max}(Y_{\max}))}{\epsilon} \label{LFDOqueue}
\end{align}
\end{small}
\noindent where $\gamma$ is the Competitive Ratio achieved by the DO algorithm.
\end{theorem}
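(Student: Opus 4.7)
The plan is to combine the per-frame competitive ratio from Corollary \ref{DOcomp} with the Lyapunov drift-plus-penalty template already developed for D look-ahead in the proof of Theorem \ref{Dlk}. The key structural observation is that within any single frame $k$, LFDO is literally the DO algorithm run with the modified per-job reward $g_j(x) = V f_j(x) + Q_{U(j)}[k]\, x$. Since $Q_n[k]$ is held fixed over the $D$ slots of the frame, each $g_j$ is still concave, non-decreasing and differentiable, so DO's competitive-ratio guarantee transfers verbatim and yields, for every frame realization,
\[
V P^{LFDO}[k] + \sum_{n=1}^N Q_n[k]\, b_n^{LFDO}[k] \;\geq\; \tfrac{1}{\gamma}\!\left(V P^{DLA}[k] + \sum_{n=1}^N Q_n[k]\, b_n^{DLA}[k]\right),
\]
because the right-hand side equals the offline maximum of the modified objective, which is exactly what the D look-ahead policy in \eqref{Dlookahead} computes.

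I would then take conditional expectations given $\mathbf{Q}[k]=\mathbf{Q}$, write the standard one-step drift bound \eqref{drift} applied to LFDO, and subtract $V\mathbb{E}(P^{LFDO}[k]\mid\mathbf{Q})$ from both sides. Plugging in the competitive-ratio inequality from Step~1 produces a drift-plus-penalty inequality of the form
\[
\triangle\Theta^{LFDO}(\mathbf{Q}) - V\mathbb{E}(P^{LFDO}[k]\mid\mathbf{Q}) \;\leq\; B + \sum_n Q_n\delta_n - \tfrac{1}{\gamma}\!\left(V\mathbb{E}(P^{DLA}\mid\mathbf{Q}) + \sum_n Q_n \mathbb{E}(b_n^{DLA}\mid\mathbf{Q})\right).
\]
Since the D look-ahead step of \eqref{Dlookahead} is the per-frame maximizer of the drift-plus-penalty objective, the right-hand expectation dominates the same quantity evaluated under any stationary, $\mathbf{Q}$-oblivious policy (this is the same substitution used in the proof of Theorem \ref{Dlk}, only now carrying an extra $1/\gamma$ factor).

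For \eqref{LFDOreward} I would use the stationary OPT as comparator (reward $P^*$, throughput $\delta_n$); for \eqref{LFDOqueue} I would use the strictly feasible $\epsilon$-slack policy postulated in the theorem's hypothesis (throughput $\geq \delta_n+\epsilon$, reward bounded by $Mf_{\max}(Y_{\max})$). Taking expectation over $\mathbf{Q}$, averaging over $k=1,\ldots,K$, telescoping the Lyapunov drift with $L(\mathbf{Q}[0])=0$ and $L(\mathbf{Q}[K])\geq 0$, and rearranging delivers both inequalities. Mean rate stability then follows from \eqref{LFDOqueue} together with Jensen's inequality $\mathbb{E}(Q_n[K])^2 \leq \mathbb{E}(Q_n[K]^2)$.

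The main technical wrinkle is the $\gamma$-bookkeeping. Substituting the $1/\gamma$-factor inequality into the drift bound leaves a residual proportional to $(1-1/\gamma)\sum_n Q_n\delta_n$ on the right-hand side, and it is precisely the interaction of this residual with the negative drift $-\tfrac{\epsilon}{\gamma}\sum_n Q_n$ coming from the $\epsilon$-slack comparator that produces the multiplicative $\gamma$ appearing in \eqref{LFDOqueue} and the $1/\gamma$ scaling in \eqref{LFDOreward}. Once these constants are tracked consistently, the proof is a clean composition of two already-established pieces: DO's primal-dual competitive-ratio bound on the slot time-scale and the drift-plus-penalty analysis of Theorem \ref{Dlk} on the frame time-scale.
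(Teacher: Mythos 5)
Your proposal follows essentially the same route as the paper: within each frame you invoke the DO competitive-ratio guarantee (Theorem \ref{key}/Corollary \ref{DOcomp}) on the modified concave reward $g_j$ to relate LFDO's drift-plus-penalty to that of the D look-ahead maximizer, exactly the paper's inequality \eqref{LongDplusP}, and then you rerun the telescoping Lyapunov argument of Theorem \ref{Dlk} with the stationary OPT and the $\epsilon$-slack policy as comparators. If anything, you are more explicit than the paper about the $(1-1/\gamma)\sum_n Q_n\delta_n$ residual created by scaling only the $VP+\sum_n Q_n b_n$ part of the objective by $1/\gamma$ --- a bookkeeping step the paper's one-line inequality glosses over --- so your writeup is a faithful (and slightly more careful) reconstruction of the intended argument.
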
 

  This result asserts that the LFDO algorithm maintains its power when moving from the adversarial to the stochastic setting. In particular, LFDO satisfies the timely throughput constraint by \eqref{LFDOqueue}. This comes at the cost of a larger queue length compared to the D Look-ahead algorithm. Similarly, the LFDO algorithm can be made arbitrarily close to achieve a $\frac{1}{\gamma}$-fraction of the stationary optimal reward, where $\gamma$ is the constant competitive ratio achieved by the DO algorithm in Corollary \ref{DOcomp}. This reduction of reward compared to the D Look-ahead algorithm is due to the non-causality advantage that the D look-ahead algorithm has over the LDFO algorithm. However, Theorem \ref{LFDOperf} shows that LFDO guarantees each user a long-term stochastic timely throughput while achieving a \textit{constant fraction} of the long-term optimal reward independent of the problem size. Proving \ref{LFDOperf} is straight-forward given the machinery we have already built. 
  
\begin{proof}
We prove the theorem by applying the key Theorem \ref{key} with reward-plus-drift function, over the frame length. Since LFDO maximizes the sum of $g_j(\,)$ functions over every frame, Theorem \ref{key} guarantees that LFDO  achieves a modified reward that is at least a $\frac{1}{\gamma}$-fraction of the reward achieved by any offline solution. Thus, we can relate the reward-plus-drift achieved by the LFDO (in the LHS) to the one achieved by the D look-ahead (in the RHS) as follows:

\vspace{-0.2cm}
\begin{small}
\begin{multline}
\gamma \bigg(\sum_{n=1}^N Q_n[k](\delta_n-b_n[k])-VP[k]\bigg)\\ 
 \leq  \sum_{n=1}^N Q_n[k](\delta_n-b_n'[k])-VP'[k] \label{LongDplusP}
\end{multline} 
\end{small}
The rest of the proof is straight-forward and follows exactly the steps of the proof of Theorem \ref{Dlk}
\end{proof} 
\section{Numerical Results}
We assess the performance of our proposed algorithms with numerical simulations. We first show that Lightweight DO tracks the offline solution very closely, outperforming several existing algorithms in the literature. We compare DO to a state of the art algorithm that was proposed in \cite{zheng2016online} in the datacenter context. We call that algorithm ``Primal" since it is also a deadline oblivious algorithm that only relies on the primal but not the dual updates to determine the allocation. Despite being a datacenter algorithm, Primal also attempts to maximize total partial job rewards, and is therefore comparable to DO (although the competitive ratio results were derived for a wired setting only). We also compare the performance against the Earliest-Due Date (EDD) that was analyzed in \cite{dua2007downlink} for packets as a benchmark, and a greedy algorithm that was proposed for linear reward functions in \cite{agarwal2002base}.\\ 
\textbf{Setup:} We simulate a downlink cell with three users (we chose a small number of users to enable the offline solver to run with reasonable memory requirements). Each time slot, for each user, a new jobs arrive to the BS intended to that user with probability $p$. Thus, $p$ represents the \textit{traffic intensity} of the system. The job sizes are uniformly distributed between 5 and 25 units. Each job has a random deadline uniformly distributed between 2 and $D_{\max}$ time slots. $D_{\max}$ represents the \textit{laxity} of the system. Smaller $D_{\max}$ means tighter deadlines. Large $D_{\max}$ implies more variety in traffic. The instantaneous rate region is generated by sampling a uniformly random distribution for each user, then taking the convex hull of those user samples. The resultant rate region is non-orthogonal. Finally, each job has a random reward function of $\frac{v(0.1+x)^{(1-\psi)}}{1-\psi}$, where $v$ and $\psi$ are uniformly distributed between 0 and 1.\\
\textbf{Performance of DO: } In Fig. \ref{traffic}, we plot the performance of different algorithms and OPT while varying traffic intensity, $p$. It is clear that DO tracks the OPT very closely, confirming our premise that Deadline Oblivious scheduling is efficient for real-time traffic. DO consistently performs $8-15\%$ better than Primal at a \textit{lower complexity}, since lightweight DO has the complexity of a linear program while primal solves a general convex program. Greedy and EDD perform significantly worse. In Fig. \ref{laxity}, we vary $D_{\max}$ between 2 and 40 time slots to simulate different workloads. The results are similar to the previous figure with DO closely tracking OPT. Interestingly, there is a slight performance degradation for very small values of $D_{\max}$ when deadlines are very tight. This is consistent with our findings regarding the dependence of competitive ratio bound on $F_{\max}$, the job-size-to-capacity ratio.\\
\begin{figure} \label{Sim3}
 \setlength{\belowcaptionskip}{-10pt}
    \centering
  \subfloat[Varying $p$ \label{traffic}]{%
       \includegraphics[width=0.5\linewidth,height=3cm]{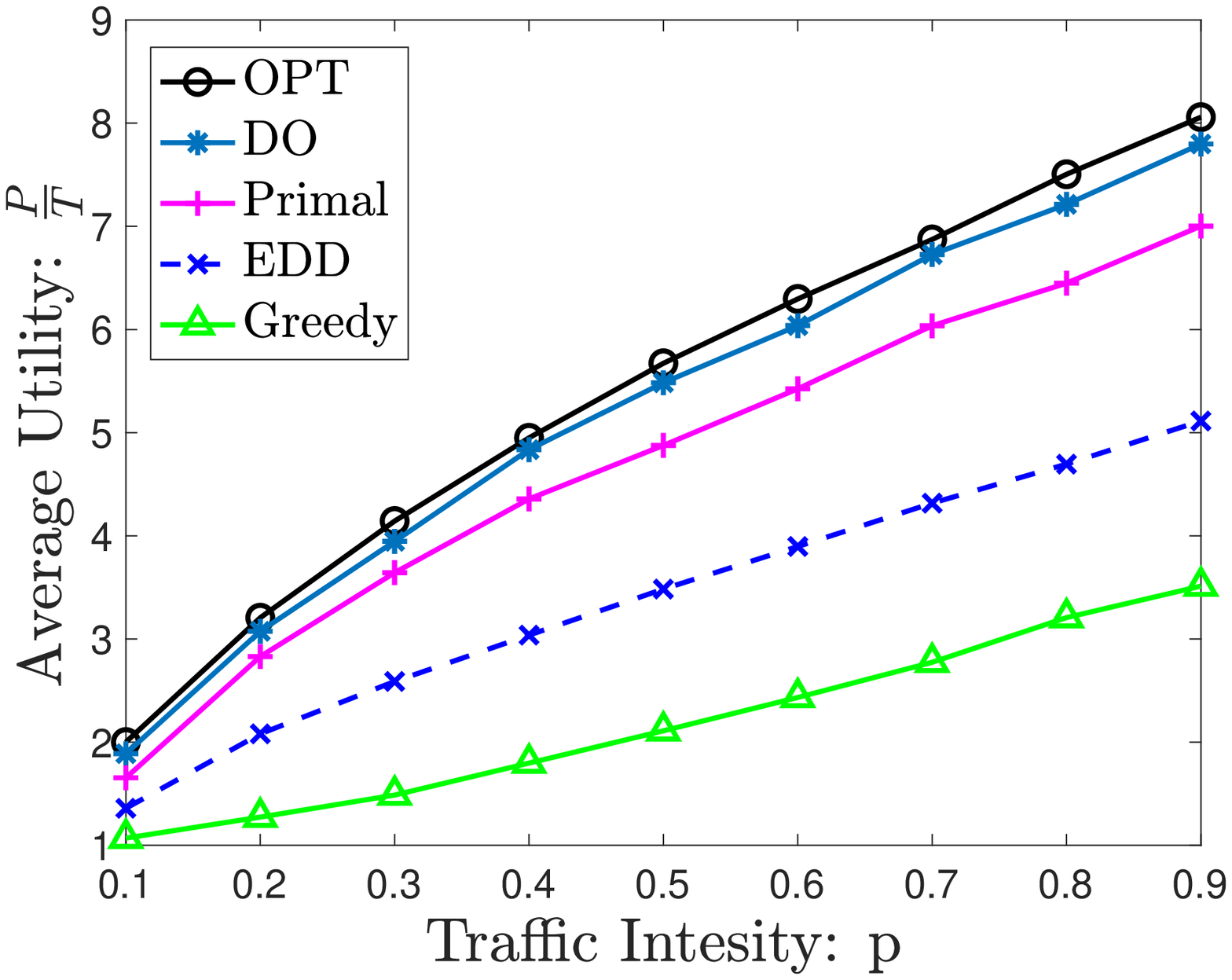}}
    \hfill
      \subfloat[Varying $D_{\max}$ \label{laxity}]{%
        \includegraphics[width=0.5\linewidth,height=3cm]{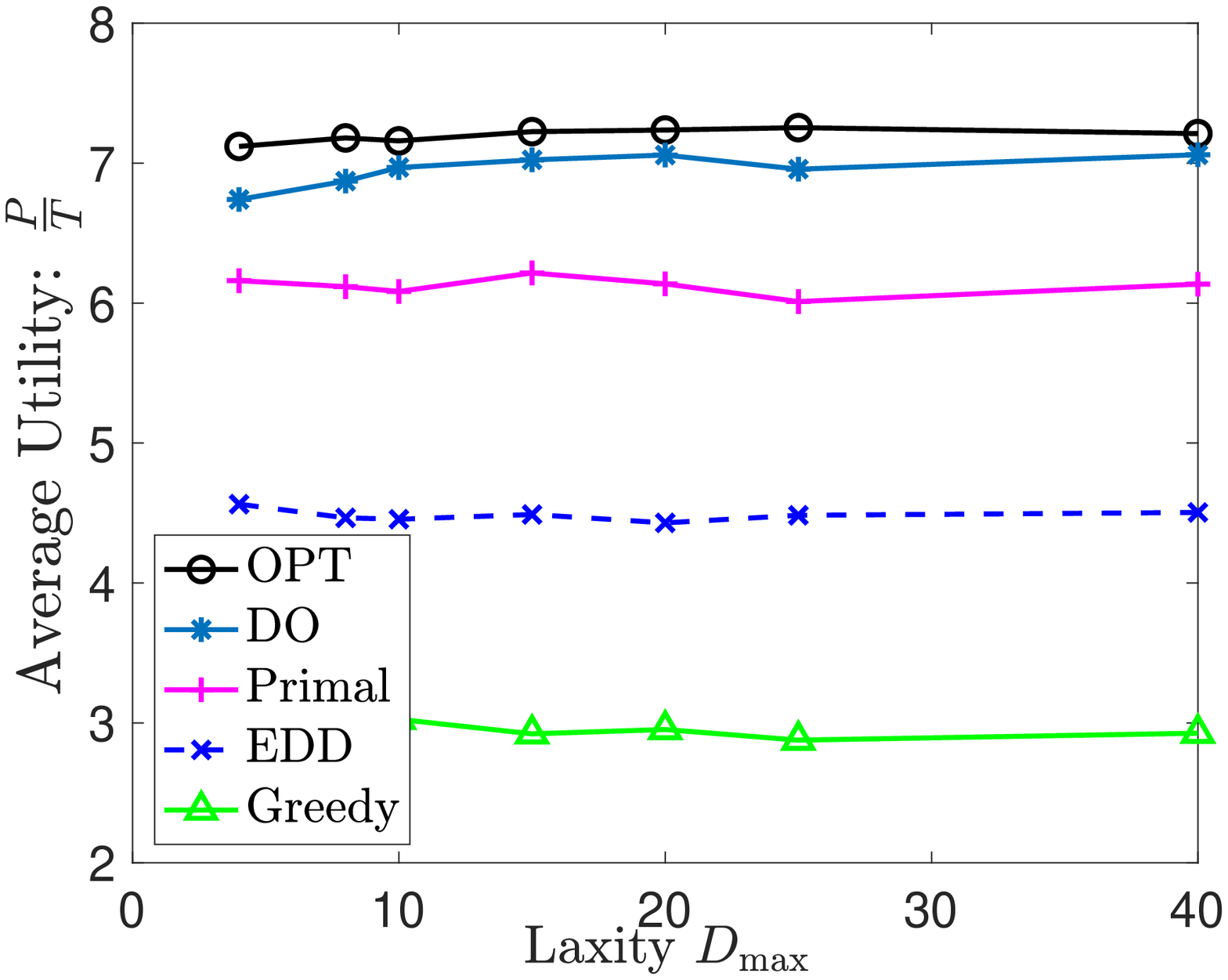}}
  \caption{Comparison of performance of different algorithms}
  \label{fig1} 
\end{figure}
\noindent \textbf{Performance of LFDO:} In Fig. \ref{Sim3}, we simulate the system for five users. We set up the simulation, such that User 1 consistently gets low feasible rates compared to other users. In particular, we sample the random rates such that User 1 can get a maximum timely throughput of 0.05, and other users can get up to 0.5 timely throughput. The instantaneous rate region is the convex hull of random rates. We set a minimum timely throughput constraint of 0.045, thus pushing the system to the boundary of the ``capacity region" by forcing User 1 to operate very close to its upper limit. In Fig. \ref{1a}, we show the timely throughput of all users under DO. Since DO tries to maximize reward with no regard to timely throughput constraints, we see that User 1 converges to a timely throughput well below the requirement. In Fig. \ref{1b}, we run LFDO for the same system with $V=1$. Despite the improvement over DO, the convergence to the required timely-throughput level is slow since virtual queues are allowed to backlog before being cleared. In Fig. \ref{1c}, we set $V=0.1$ emphasizing the importance of timely throughput constraints. The result is that User 1 can now satisfy the constraint with fairly quick convergence at the expense of slightly decreased reward (within $95\%$ of DO reward). This outlines the previously stated trade-off between the reward and the timely throughput guarantees.
\begin{figure} 
  \subfloat[DO\label{1a}]{%
       \includegraphics[width=0.33\linewidth]{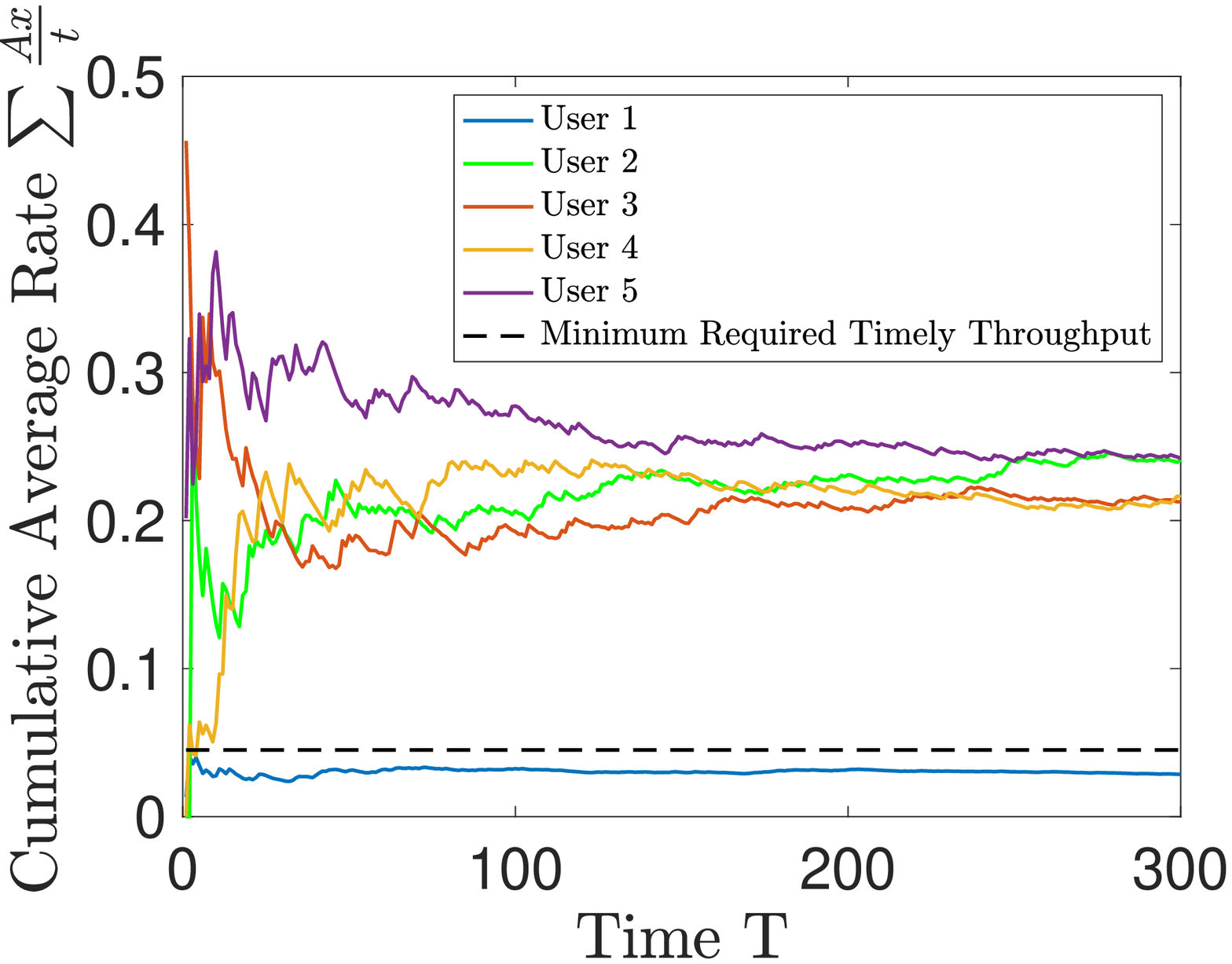}}
\subfloat[LFDO, $V=1$\label{1b}]{%
        \includegraphics[width=0.33\linewidth]{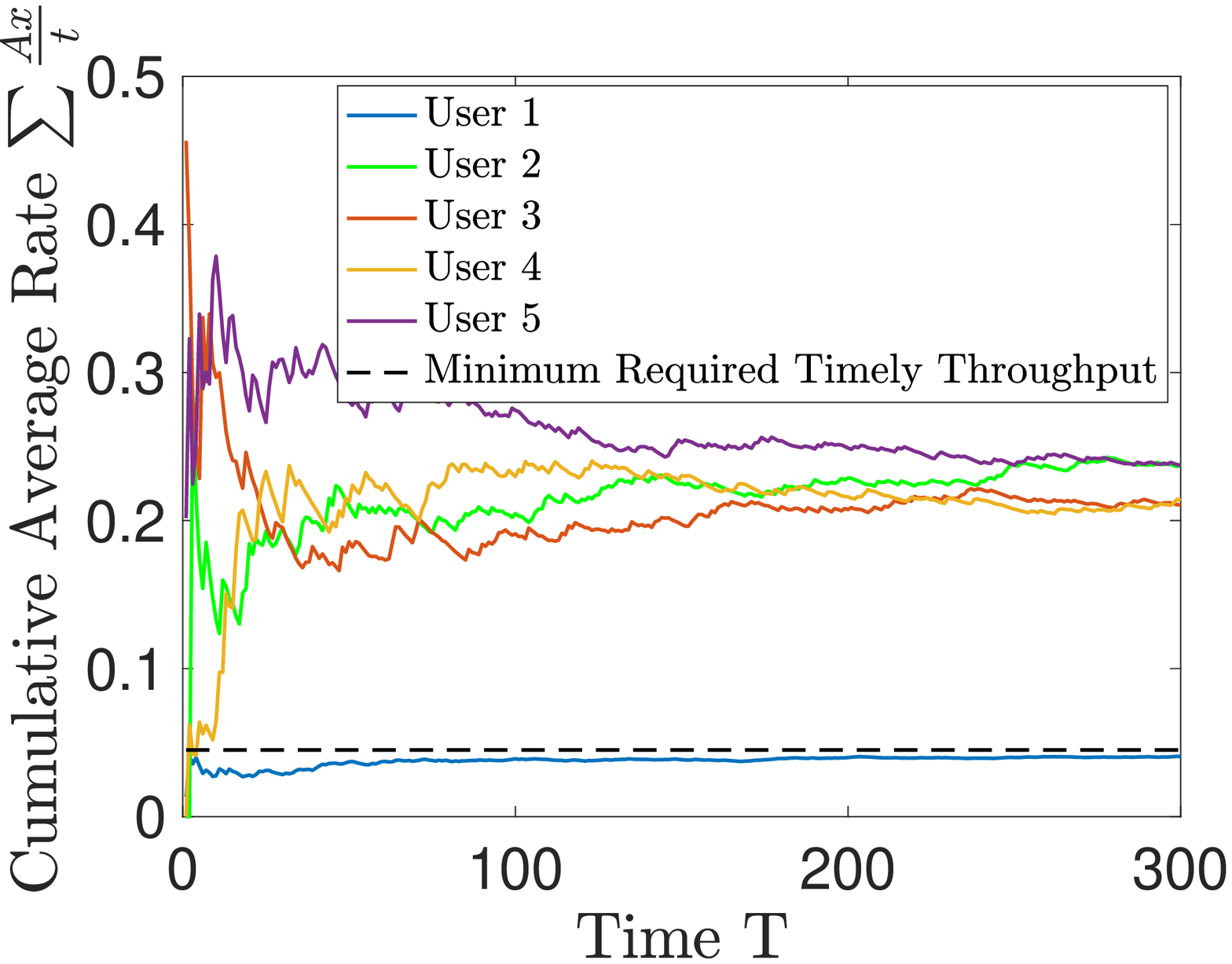}}
  \subfloat[LFDO, $V=0.1$\label{1c}]{%
        \includegraphics[width=0.33\linewidth]{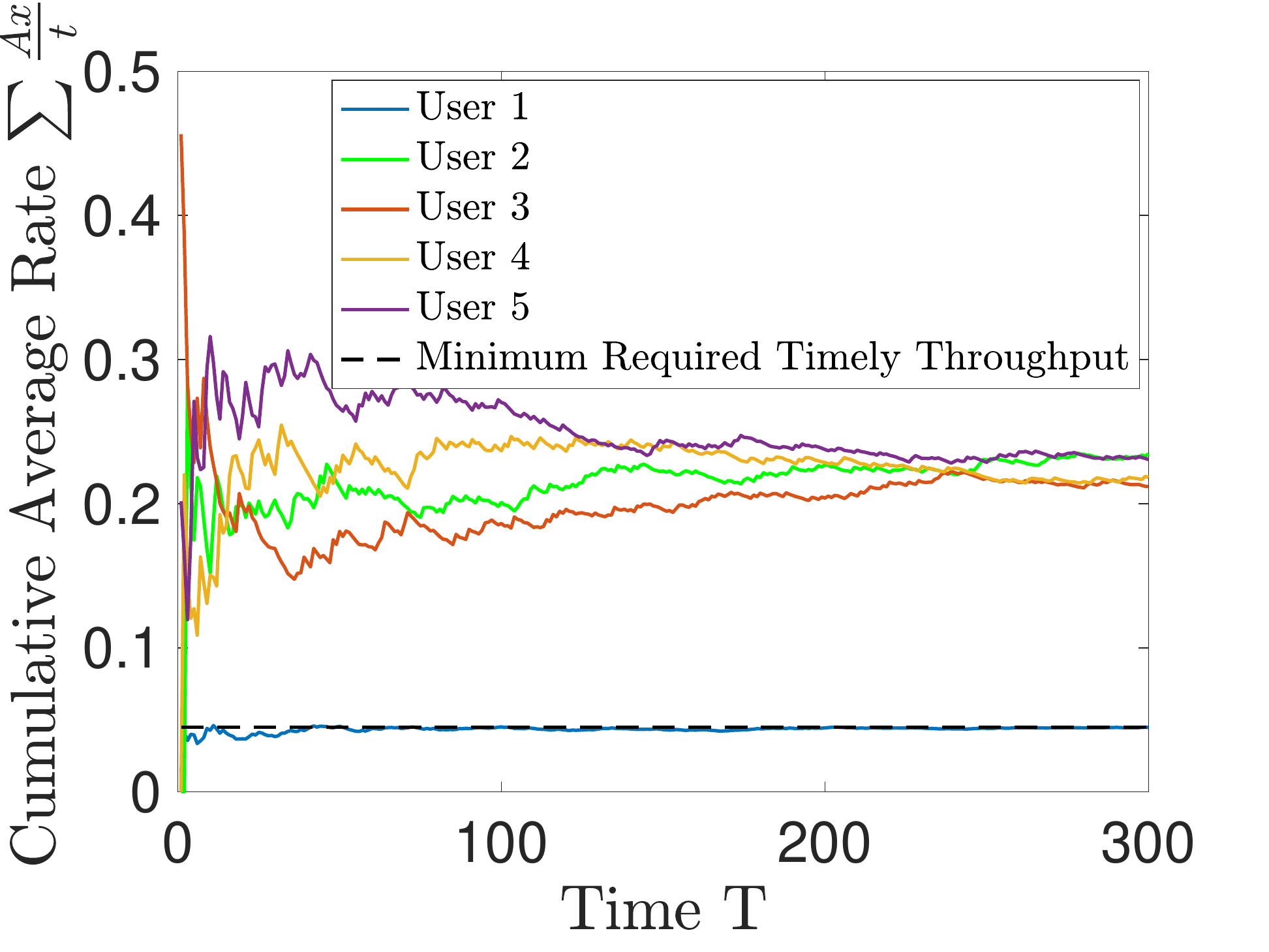}}
  \caption{Resource allocation per user under DO and LFDO}
  \label{Sim3} 
\end{figure}
\section{Conclusion and Future Work}
We have studied the problem of resource-allocation of low-latency bandwidth-intensive traffic. We have formulated the problem as an online convex optimization problem, developed a low-complexity Primal-Dual DO algorithm and derived its competitive ratio. We have demonstrated that our algorithm is efficient \textit{and does not rely on deadline information}. We have also proposed the LFDO algorithm, that modifies DO to satisfy long-term stochastic timely-throughput constraints. We have shown via simulations that our proposed algorithms tracks the offline optimal solution very closely and performs better than existing solutions. In the future work, we aim to understand the properties of DO algorithm better, for example, we aim to analyze how many jobs are served to their completion. This will enable us to expand the algorithm to serve traffic that must be served to completion as well as traffic that has the partial utility property. We aim to develop our work to take the unreliability of wireless channels and inaccurate channel estimations into account. We also plan to test our algorithm with a real-time setup through the variety of traffic seen in 5G networks.  

\section{Acknowledgment}
This work has been supported in part by National Science Foundation awards CNS-1618566, CNS-1719371, CNS-1409336, CNS-1731698,
CNS-1814923, and from the Office of Naval Research award N00014-17-1-2417.

\bibliographystyle{IEEEtran}  
\bibliography{main}  

\begin{thebibliography}{10}
\providecommand{\url}[1]{#1}
\csname url@samestyle\endcsname
\providecommand{\newblock}{\relax}
\providecommand{\bibinfo}[2]{#2}
\providecommand{\BIBentrySTDinterwordspacing}{\spaceskip=0pt\relax}
\providecommand{\BIBentryALTinterwordstretchfactor}{4}
\providecommand{\BIBentryALTinterwordspacing}{\spaceskip=\fontdimen2\font plus
\BIBentryALTinterwordstretchfactor\fontdimen3\font minus
  \fontdimen4\font\relax}
\providecommand{\BIBforeignlanguage}[2]{{%
\expandafter\ifx\csname l@#1\endcsname\relax
\typeout{** WARNING: IEEEtran.bst: No hyphenation pattern has been}%
\typeout{** loaded for the language `#1'. Using the pattern for}%
\typeout{** the default language instead.}%
\else
\language=\csname l@#1\endcsname
\fi
#2}}
\providecommand{\BIBdecl}{\relax}
\BIBdecl

\bibitem{hosseini2016adaptive}
M.~Hosseini and V.~Swaminathan, ``Adaptive 360 vr video streaming: Divide and
  conquer,'' in \emph{2016 IEEE International Symposium on Multimedia
  (ISM)}.\hskip 1em plus 0.5em minus 0.4em\relax IEEE, 2016, pp. 107--110.

\bibitem{lee2015outatime}
K.~Lee, D.~Chu, E.~Cuervo, J.~Kopf, Y.~Degtyarev, S.~Grizan, A.~Wolman, and
  J.~Flinn, ``Outatime: Using speculation to enable low-latency continuous
  interaction for mobile cloud gaming,'' in \emph{Proceedings of the 13th
  Annual International Conference on Mobile Systems, Applications, and
  Services}.\hskip 1em plus 0.5em minus 0.4em\relax ACM, 2015, pp. 151--165.

\bibitem{buchbinder2009design}
N.~Buchbinder, J.~S. Naor \emph{et~al.}, ``The design of competitive online
  algorithms via a primal--dual approach,'' \emph{Foundations and
  Trends{\textregistered} in Theoretical Computer Science}, vol.~3, no. 2--3,
  pp. 93--263, 2009.

\bibitem{zheng2016online}
Z.~Zheng and N.~B. Shroff, ``Online multi-resource allocation for deadline
  sensitive jobs with partial values in the cloud,'' in \emph{Computer
  Communications, IEEE INFOCOM 2016-The 35th Annual IEEE International
  Conference on}.\hskip 1em plus 0.5em minus 0.4em\relax IEEE, 2016, pp. 1--9.

\bibitem{lucier2013efficient}
B.~Lucier, I.~Menache, J.~S. Naor, and J.~Yaniv, ``Efficient online scheduling
  for deadline-sensitive jobs,'' in \emph{Proceedings of the twenty-fifth
  annual ACM symposium on Parallelism in algorithms and architectures}.\hskip
  1em plus 0.5em minus 0.4em\relax ACM, 2013, pp. 305--314.

\bibitem{devanur2018primal}
N.~R. Devanur and Z.~Huang, ``Primal dual gives almost optimal energy-efficient
  online algorithms,'' \emph{ACM Transactions on Algorithms (TALG)}, 2018.

\bibitem{pruhs2004online}
K.~Pruhs, J.~Sgall, and E.~Torng, ``Online scheduling.'' 2004.

\bibitem{hou2014scheduling}
I.-H. Hou, ``Scheduling heterogeneous real-time traffic over fading wireless
  channels,'' \emph{IEEE/ACM Transactions on Networking}, vol.~22, no.~5, pp.
  1631--1644, 2014.

\bibitem{lashgari2013timely}
S.~Lashgari and A.~S. Avestimehr, ``Timely throughput of heterogeneous wireless
  networks: Fundamental limits and algorithms,'' \emph{IEEE Transactions on
  Information Theory}, vol.~59, no.~12, pp. 8414--8433, 2013.

\bibitem{shakkottai2002scheduling}
S.~Shakkottai and R.~Srikant, ``Scheduling real-time traffic with deadlines
  over a wireless channel,'' \emph{Wireless Networks}, vol.~8, no.~1, pp.
  13--26, 2002.

\bibitem{dai2015non}
L.~Dai, B.~Wang, Y.~Yuan, S.~Han, I.~Chih-Lin, and Z.~Wang, ``Non-orthogonal
  multiple access for 5g: solutions, challenges, opportunities, and future
  research trends,'' \emph{IEEE Communications Magazine}, 2015.

\bibitem{boyd2004convex}
S.~Boyd and L.~Vandenberghe, \emph{Convex optimization}.\hskip 1em plus 0.5em
  minus 0.4em\relax Cambridge university press, 2004.

\bibitem{mehta2013online}
A.~Mehta \emph{et~al.}, ``Online matching and ad allocation,''
  \emph{Foundations and Trends{\textregistered} in Theoretical Computer
  Science}, 2013.

\bibitem{azar2016online}
Y.~Azar, N.~Buchbinder, T.~H. Chan, S.~Chen, I.~R. Cohen, A.~Gupta, Z.~Huang,
  N.~Kang, V.~Nagarajan, J.~Naor \emph{et~al.}, ``Online algorithms for
  covering and packing problems with convex objectives,'' in \emph{IEEE 57th
  Annual Symposium on Foundations of Computer Science (FOCS), 2016}.\hskip 1em
  plus 0.5em minus 0.4em\relax IEEE, 2016, pp. 148--157.

\bibitem{eghbali2016designing}
R.~Eghbali and M.~Fazel, ``Designing smoothing functions for improved
  worst-case competitive ratio in online optimization,'' in \emph{Advances in
  Neural Information Processing Systems}, 2016, pp. 3287--3295.

\bibitem{liu2001opportunistic}
X.~Liu, E.~K.~P. Chong, and N.~B. Shroff, ``Opportunistic transmission
  scheduling with resource-sharing constraints in wireless networks,''
  \emph{IEEE Journal on Selected Areas in Communications}, vol.~19, no.~10, pp.
  2053--2064, 2001.

\bibitem{jaramillo2011optimal}
J.~J. Jaramillo and R.~Srikant, ``Optimal scheduling for fair resource
  allocation in ad hoc networks with elastic and inelastic traffic,''
  \emph{IEEE/ACM Transactions on Networking (TON)}, vol.~19, no.~4, pp.
  1125--1136, 2011.

\bibitem{deng2017timely}
L.~Deng, C.-C. Wang, M.~Chen, and S.~Zhao, ``Timely wireless flows with general
  traffic patterns: Capacity region and scheduling algorithms,'' \emph{IEEE/ACM
  Transactions on Networking}, vol.~25, no.~6, pp. 3473--3486, 2017.

\bibitem{neely2010stochastic}
M.~J. Neely, ``Stochastic network optimization with application to
  communication and queueing systems,'' \emph{Synthesis Lectures on
  Communication Networks}, vol.~3, no.~1, pp. 1--211, 2010.

\bibitem{tan2012online}
B.~Tan and R.~Srikant, ``Online advertisement, optimization and stochastic
  networks,'' \emph{IEEE Transactions on Automatic Control}, vol.~57, no.~11,
  pp. 2854--2868, 2012.

\bibitem{dua2007downlink}
A.~Dua and N.~Bambos, ``Downlink wireless packet scheduling with deadlines,''
  \emph{IEEE Transactions on Mobile Computing}, vol.~6, no.~12, pp. 1410--1425,
  2007.

\bibitem{agarwal2002base}
M.~Agarwal and A.~Puri, ``Base station scheduling of requests with fixed
  deadlines,'' in \emph{Proceedings. Twenty-First Annual Joint Conference of
  the IEEE Computer and Communications Societies}, vol.~2.\hskip 1em plus 0.5em
  minus 0.4em\relax IEEE, 2002, pp. 487--496.

\end{thebibliography}

\end{document}